\begin{document}

\newcommand\relatedversion{}
\renewcommand\relatedversion{\thanks{The arXiv version of the paper can be accessed at \protect\url{https://arxiv.org/abs/2308.12483}}} % Replace URL with link to full paper or comment out this line

\title{\Large Linear-Sized Spectral Sparsifiers and the Kadison-Singer Problem\relatedversion}
\author{Phevos Paschalidis\thanks{Harvard University. \href{mailto:ppaschalidis@college.harvard.edu}{ppaschalidis@college.harvard.edu}}
\and Ashley Zhuang\thanks{Harvard University. \href{mailto:azhuang@college.harvard.edu}{azhuang@college.harvard.edu}}}

\date{}

\maketitle

% Copyright Statement
% When submitting your final paper to a SIAM proceedings, it is requested that you include
% the appropriate copyright in the footer of the paper.  The copyright added should be
% consistent with the copyright selected on the copyright form submitted with the paper.
% Please note that "20XX" should be changed to the year of the meeting.

% Default Copyright Statement
\fancyfoot[R]{\scriptsize{Copyright \textcopyright\ 2024 by SIAM\\
Unauthorized reproduction of this article is prohibited}}

% Depending on which copyright you agree to when you sign the copyright form, the copyright
% can be changed to one of the following after commenting out the default copyright statement
% above.

%\fancyfoot[R]{\scriptsize{Copyright \textcopyright\ 20XX\\
%Copyright for this paper is retained by authors}}

%\fancyfoot[R]{\scriptsize{Copyright \textcopyright\ 20XX\\
%Copyright retained by principal author's organization}}

%\pagenumbering{arabic}
%\setcounter{page}{1}%Leave this line commented out.

\begin{abstract} \small\baselineskip=9pt 
The Kadison-Singer Conjecture, as proved by Marcus, Spielman, and Srivastava (MSS) \cite{marcus2015interlacing}, has been informally thought of as a strengthening of Batson, Spielman, and Srivastava’s theorem that every undirected graph has a linear-sized spectral sparsifier \cite{batson2012twice}. We formalize this intuition by using a corollary of the MSS result to derive the existence of spectral sparsifiers with a number of edges linear in their number of vertices for all undirected, weighted graphs. The proof consists of two steps. First, following a suggestion of Srivastava \cite{srivastava2013discrepancy}, we show the result in the special case of graphs with bounded leverage scores by repeatedly applying the MSS corollary to partition the graph, while maintaining an appropriate bound on the leverage scores of each subgraph. Then, we extend to the general case by constructing a recursive algorithm that repeatedly (i) divides edges with high leverage scores into multiple parallel edges and (ii) uses the bounded leverage score case to sparsify the resulting graph.
\end{abstract}

\section{Introduction}

In the design of fast and space-efficient algorithms, one powerful concept is that of graph sparsification. Rather than run a computationally expensive algorithm on a potentially dense graph $G = (V, E)$ with $|V|=n$ and $|E| = m$, one can instead use a new, sparser graph $\widetilde{G} = (V, \widetilde{E})$ with $|\widetilde{E}| \ll m$ that approximates the original with respect to some important properties, usually obtained by selecting --- and potentially reweighting --- some of $G$'s edges. In one of the first instances of sparsification, Benczur and Karger \cite{benczur1996approximating} introduced the notion of cut-sparsifiers, and used them to improve the asymptotic runtime of state-of-the-art minimum $s$-$t$ cut and sparsest cut approximation algorithms by first cut-sparsifying the input graph and then running the algorithms.

\subsection{Spectral Sparsification.}

In the seminal paper of Spielman and Teng \cite{spielman2011spectral}, they introduced a spectral form of graph sparsification to address problems in numerical linear algebra and spectral graph theory; this notion of sparsification involves approximating the Laplacian. The Laplacian matrix of an undirected, weighted graph $G$ is $L_G=D-A$, where $D$ is the diagonal matrix of degrees and $A$ is the weighted adjacency matrix, and it encodes fundamental information about the graph's cuts, random walks, etc. Equivalently, we can define it as a sum of rank-one matrices:

\begin{Definition}[The Laplacian]\normalfont \label{def:lapl}
    The \emph{Laplacian matrix} of an undirected, weighted graph $G = (V, E, w)$ is
    \[
    L_G = \sum_{\{i, j\}\in E} w_{ij} (\delta_i - \delta_j)(\delta_i - \delta_j)^T = \sum_{\{i, j\} \in E} b_{ij} b_{ij}^T,
    \]
    where $\delta_i$ is the $i^{\textrm{th}}$ standard basis vector and $b_{ij} \coloneqq \sqrt{w_{ij}}(\delta_i - \delta_j)$ is the \emph{weighted incidence vector} for $\{i,j\}$.
\end{Definition}
The Spielman and Teng notion of \emph{spectral sparsification} requires
that the Laplacian quadratic form of the sparsifier approximates that of the original graph. Formally,

% It follows that $L_G$ is positive semidefinite, and that $G$ is connected if and only if $\ker(L_G) = \textrm{span}(\mathbf{1})$.

% In the seminal paper of Spielman and Teng \cite{spielman2011spectral}, they introduce \emph{spectral sparsification}, which requires that the Laplacian quadratic form of the sparsifier approximates that of the original graph. Formally,

\begin{Definition}[Spectral approximation] \normalfont
\label{def:spec-approx}
Given undirected, weighted graphs $G$ and $H$, we say $H$ is an $\epsilon$\emph{-spectral approximation} of $G$ if
\[
% (1-\epsilon) \cdot x^T L_G x \leq x^T L_H x \leq (1+\epsilon) \cdot  x^T L_G x.
(1-\epsilon)L_G \preceq L_H \preceq (1+\epsilon)L_G,
\]
where $\preceq$ represents the \emph{Löwner order}; i.e., $A \preceq B$ for Hermitian $A,B$ if $B-A$ is positive semidefinite.
\end{Definition}

\noindent This notion of sparsification is strictly stronger than that of cut sparsifiers ---  a spectral sparsifier automatically satisfies the requirements of Benczur and Karger's cut sparsifier definition. Despite this, Spielman and Teng were still able to show the existence of sparsifiers with $\widetilde{O}(n/\epsilon^2)$ edges for any undirected graph, and indeed used their result to design nearly-linear time algorithms for solving diagonally-dominant linear systems \cite{spielman2011spectral, spielman2006linear}. Their result was soon improved upon by Spielman and Srivastava \cite{spielman2011graph}, who used a random sampling technique to derive sparsifiers of size $O(n\log n/\epsilon^2)$ in nearly-linear time, and then again by Batson, Spielman, and Srivastava (BSS) \cite{batson2012twice} who showed a deterministic, polynomial-time algorithm for finding linear-sized sparsifiers. Specifically, they show a stronger version of the following main theorem\footnote{In fact, the number of edges in the construction BSS proves has only twice as many edges as the Ramanujan graph, a well-known, excellent sparsifier for the complete graph. Their proof also holds for larger $\epsilon$.}:
\begin{theorem}[Weaker version of Theorem 1.1, \cite{batson2012twice}] \label{th:BSS}
For every $0 < \epsilon < 1$, every undirected, weighted graph $G$ with $n$ vertices contains a reweighted subgraph $H$ with $O(n/\epsilon^2)$ edges such that $H$ is an $\epsilon$-spectral approximation of $G$.
% \[ 
% \frac{1}{1+\epsilon}L_G \preceq L_H \preceq (1+\epsilon) L_G,
% \]
% which is equivalent to satisfying Definition \ref{def:spec-approx} (up to a constant factor for $\epsilon$).
\end{theorem}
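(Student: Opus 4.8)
The plan is to reduce the theorem to a coordinate-free statement about writing (a projection onto) the identity as a short sum of rank-one matrices, and then to prove that statement by the method of barrier potentials. Assume first that $G$ is connected (otherwise run the argument component-by-component and add up the edge counts), so $L_G$ has rank $d=n-1$ with image $W=\mathbf{1}^{\perp}$. Writing $L_G=\sum_{e\in E}b_e b_e^{\mathsf T}$ as in Definition~\ref{def:lapl}, set $v_e\coloneqq L_G^{+/2}b_e$, where $L_G^{+/2}$ is the square root of the pseudoinverse. Then each $v_e\in W$ and $\sum_{e}v_e v_e^{\mathsf T}=\Pi_W$, the orthogonal projection onto $W$, which acts as the identity on the $d$-dimensional space $W$. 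It therefore suffices to produce nonnegative weights $\{s_e\}_{e\in E}$, at most $\lceil d/\delta^{2}\rceil$ of them nonzero for a suitable $\delta$ proportional to $\epsilon$, with
\[
\Pi_W \preceq \sum_{e}s_e v_e v_e^{\mathsf T}\preceq \kappa\,\Pi_W,\qquad \kappa=\Big(\tfrac{1+\delta}{1-\delta}\Big)^{2},
\]
since conjugating by $L_G^{1/2}$ turns this into $L_G\preceq \sum_e s_e b_e b_e^{\mathsf T}\preceq \kappa L_G$, and then a global rescaling of the $s_e$ together with the choice of $\delta$ converts the multiplicative gap $\kappa$ into the two-sided bound $(1-\epsilon)L_G\preceq L_H\preceq(1+\epsilon)L_G$ for the reweighted subgraph $H$ with edge weights $s_{ij}w_{ij}$, which has $O(n/\epsilon^{2})$ edges.

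The core is the following purely linear-algebraic claim: if $v_1,\dots,v_m\in\R^{d}$ satisfy $\sum_i v_iv_i^{\mathsf T}=I$ and $k>d$, then one can choose indices $i_1,\dots,i_k$ and weights $t_1,\dots,t_k>0$ so that $A\coloneqq\sum_{j=1}^{k}t_j v_{i_j}v_{i_j}^{\mathsf T}$ has all its eigenvalues in an interval of ratio $\big(\tfrac{\sqrt k+\sqrt d}{\sqrt k-\sqrt d}\big)^{2}$; taking $k=\lceil d/\delta^2\rceil$ makes this ratio equal to $\kappa$. I would build $A$ greedily, one rank-one term at a time, while controlling the spectrum with the two barrier potentials
\[
\Phi^{u}(A)=\operatorname{Tr}\big((uI-A)^{-1}\big),\qquad \Phi_{\ell}(A)=\operatorname{Tr}\big((A-\ell I)^{-1}\big),
\]
which blow up as $\lambda_{\max}(A)\uparrow u$ or $\lambda_{\min}(A)\downarrow\ell$. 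Start from $A_0=0$ with barriers $u_0,\ell_0$ chosen so both potentials are small; at each step shift $u\mapsto u+\delta_U$ and $\ell\mapsto\ell+\delta_L$ by fixed increments and add a term $t\,v_iv_i^{\mathsf T}$ for which \emph{neither} potential increases. After $k$ steps the eigenvalues are trapped in $[\,\ell_0+k\delta_L,\;u_0+k\delta_U\,]$, and $u_0,\ell_0,\delta_U,\delta_L$ are fixed in advance so that this interval has ratio exactly $\kappa$.

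The crux — and the step I expect to be the main obstacle — is showing that at every step a usable pair $(i,t)$ exists. By the Sherman–Morrison formula, adding $t\,v_iv_i^{\mathsf T}$ and shifting $u$ keeps $\Phi^{u}$ from increasing provided $t\le \big(U_A(v_i)\big)^{-1}$ for an explicit functional $U_A$ built from $\big((u+\delta_U)I-A\big)^{-1}$ and the current slack in $\Phi^{u}$; likewise $\Phi_{\ell}$ does not increase provided $t\ge \big(L_A(v_i)\big)^{-1}$ for an explicit $L_A$. Hence it suffices to find an index $i$ with $L_A(v_i)\ge U_A(v_i)$. This is where the hypothesis $\sum_i v_iv_i^{\mathsf T}=I$ enters: summing over $i$, both $\sum_i U_A(v_i)$ and $\sum_i L_A(v_i)$ collapse to traces of fixed matrices expressible through $\Phi^{u}$, $\Phi_{\ell}$, and the shift parameters, and with $\delta_U,\delta_L$ tuned so the potentials never exceed their initial values one obtains $\sum_i L_A(v_i)\ge\sum_i U_A(v_i)$, forcing some $i$ to satisfy the per-vector inequality. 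Arranging all of these estimates to hold simultaneously — choosing $u_0$ of order $\sqrt{d}\,k/\dots$ and the increments so that the "upper budget" always dominates the "lower demand" while the terminal ratio comes out to precisely $\big(\tfrac{\sqrt k+\sqrt d}{\sqrt k-\sqrt d}\big)^{2}$ — is the delicate part; the reduction before it and the translation back to graphs after it are routine.
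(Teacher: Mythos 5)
Your proposal is correct in outline, but it reconstructs the original barrier-potential argument of \cite{batson2012twice} rather than following this paper's route, so you have given a genuinely different proof. Here, the engine is Theorem~\ref{th:KS}, a consequence of the Marcus--Spielman--Srivastava resolution of the Kadison--Singer problem: any collection of vectors can be bipartitioned so that each half carries roughly half of the quadratic form, with error controlled by the maximum leverage score. The paper iterates this bipartition roughly $\log(m/n)$ times, shows that the leverage-score bound (and hence the per-step error) grows only geometrically (Lemma~\ref{lem:levbounds}, Corollary~\ref{cor:geom}), and extracts a single small subgraph whose accumulated distortion telescopes to $e^{\pm\epsilon}$; the general weighted case is then handled by splitting high-leverage edges into parallel copies, sparsifying, and recombining (Theorem~\ref{th:rsboundcase}, Algorithm~\ref{alg:1}). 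Your approach instead reduces to the isotropic situation $\sum_e v_e v_e^{T}=\Pi_W$ via $v_e = L_G^{+/2} b_e$ and builds the sparsifier greedily, adding weighted rank-one terms while keeping the two trace-barrier potentials $\Phi^u,\Phi_\ell$ bounded, with Sherman--Morrison controlling the per-step change and an averaging argument over $i$ producing a feasible pair $(i,t)$ at each step. Both routes are sound. Yours is elementary, self-contained, and --- once the parameter tuning you gesture at is carried out --- yields the sharper ``twice-Ramanujan'' constants without invoking the deep MSS theorem; the ``delicate part'' you flag, arranging $\sum_i L_A(v_i)\geq\sum_i U_A(v_i)$ at every step while ending at ratio $\left(\frac{\sqrt k+\sqrt d}{\sqrt k-\sqrt d}\right)^{2}$, is precisely Lemma~3.5 of \cite{batson2012twice} and does close. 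The paper's route yields weaker constants but achieves a different goal: it makes rigorous the informal claim that Kadison--Singer subsumes linear-size spectral sparsification, and, in the bounded-leverage-score case, it produces a sparsifier in which all surviving edges are reweighted by a common factor of $2^t$, a structural feature the barrier argument does not give.
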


\subsection{Connections to the Kadison-Singer Problem.} 

In the conclusion of \cite{batson2012twice}, the authors make an interesting connection between their main theorem and an outstanding open problem in mathematics: the Kadison-Singer conjecture, which dates back to 1959 \cite{kadison1959extensions}. Using a reformulation of the Kadison-Singer problem due to Weaver \cite{WEAVER2004227}, BSS conclude that stronger version of their theorem, one in which all the edges of the original graph were either discarded or reweighted identically, would imply a positive solution to the Kadison-Singer problem \cite{batson2012twice}. Though the Kadison-Singer conjecture was not proven in this way, a positive solution was shown a few years later by Markus, Spielman, and Srivastava (MSS) \cite{marcus2015interlacing} who used the Weaver equivalence and a multivariate generalization of the argument made by BSS. 

Following their breakthrough, Srivastava, author of both the BSS and MSS papers, discussed the similarity between the two results in a newsletter \cite{srivastava2013discrepancy}. Specifically, Srivastava outlined a proof for linear-sized sparsifiers of the unweighted complete graph that is based on the following implication of the MSS result. 

\begin{theorem}[Theorem 2, \cite{srivastava2013discrepancy}. Implied by Corollary 1.5, \cite{marcus2015interlacing}]
\label{th:KS}
    Given vectors $v_1, \dots, v_m \in \mathbb{R}^n$, there exists a partition $T_1 \sqcup T_2 = [m] = \{1, \ldots, m\}$, such that for $j=1,2$,
    \[
        \left(\frac{1}{2} - 5\sqrt{\alpha}\right)\left(\sum_{i=1}^m v_iv_i^T\right) \preceq \sum_{i \in T_j} v_iv_i^T \preceq \left(\frac{1}{2} + 5\sqrt{\alpha}\right)\left(\sum_{i=1}^m v_iv_i^T\right)
    \]
    where $\alpha = \max_i v_i^T(\sum_{i=1}^m v_iv_i^T)^{+}v_i$ and $A^+$ denotes the Moore-Penrose Pseudoinverse of $A$.
\end{theorem}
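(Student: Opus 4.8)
The plan is to reduce to the isotropic case covered by Corollary 1.5 of \cite{marcus2015interlacing}, invoke it, and then transport the resulting spectral-norm bound back through a congruence. Write $M \coloneqq \sum_{i=1}^m v_iv_i^T$, let $d \coloneqq \operatorname{rank}(M)$, and let $\Pi$ be the orthogonal projection onto $\operatorname{range}(M)$; note that each $v_i$ already lies in $\operatorname{range}(M)$. If $\alpha = 0$ every $v_i$ vanishes and any partition works, so assume $\alpha > 0$. I would set $u_i \coloneqq M^{+/2}v_i$, where $M^{+/2}$ is the positive semidefinite square root of the Moore--Penrose pseudoinverse $M^{+}$. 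Then $\sum_i u_iu_i^T = M^{+/2}MM^{+/2} = \Pi$, so, viewing the $u_i$ as vectors of the $d$-dimensional space $\operatorname{range}(M)$ (on which $\Pi$ acts as the identity), one obtains an isotropic system $\sum_i u_iu_i^T = I_d$ with $\|u_i\|^2 = v_i^TM^{+}v_i \le \alpha$ for all $i$; moreover $u_iu_i^T \preceq \Pi \preceq I$ forces $\alpha \le 1$, a fact that will be needed.

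Next I would apply Corollary 1.5 of \cite{marcus2015interlacing} with $r = 2$ and $\delta = \alpha$. It produces a partition $T_1 \sqcup T_2 = [m]$ with
\[
\Bigl\| \sum_{i \in T_j} u_iu_i^T \Bigr\| \;\le\; \Bigl( \tfrac{1}{\sqrt2} + \sqrt\alpha \Bigr)^2 \qquad (j = 1,2).
\]
Since each $u_i \in \operatorname{range}(\Pi)$, this spectral-norm bound gives $\sum_{i \in T_j} u_iu_i^T \preceq \bigl(\tfrac{1}{\sqrt2} + \sqrt\alpha\bigr)^2 \Pi$; complementarily, because $\sum_{i \in T_j}u_iu_i^T = \Pi - \sum_{i \notin T_j}u_iu_i^T$, the same bound applied to the other block yields $\sum_{i \in T_j} u_iu_i^T \succeq \bigl(1 - (\tfrac{1}{\sqrt2}+\sqrt\alpha)^2\bigr)\Pi$. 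Expanding $(\tfrac{1}{\sqrt2}+\sqrt\alpha)^2 = \tfrac12 + \sqrt2\,\sqrt\alpha + \alpha$ and using $\alpha \le 1$ to absorb $\sqrt2\,\sqrt\alpha + \alpha \le (1+\sqrt2)\sqrt\alpha \le 5\sqrt\alpha$, both estimates tighten to
\[
\bigl( \tfrac12 - 5\sqrt\alpha \bigr)\Pi \;\preceq\; \sum_{i \in T_j} u_iu_i^T \;\preceq\; \bigl( \tfrac12 + 5\sqrt\alpha \bigr)\Pi .
\]

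The final step is to conjugate by the positive semidefinite square root $M^{1/2}$. Since $M^{1/2}u_i = M^{1/2}M^{+/2}v_i = \Pi v_i = v_i$ and $M^{1/2}\Pi M^{1/2} = M$, and since the map $X \mapsto M^{1/2}XM^{1/2}$ preserves the Löwner order, applying it to the previous display gives
\[
\bigl( \tfrac12 - 5\sqrt\alpha \bigr)M \;\preceq\; \sum_{i \in T_j} v_iv_i^T \;\preceq\; \bigl( \tfrac12 + 5\sqrt\alpha \bigr)M ,
\]
which is the claim. I expect no serious obstacle: the whole combinatorial difficulty --- that a near-balanced partition exists at all --- is outsourced to the MSS corollary. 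The only care required is the pseudoinverse bookkeeping: one must check that every $v_i$ lies in $\operatorname{range}(M)$ so that $M^{1/2}M^{+/2}$ restricts to the identity on the relevant vectors, that restriction to $\operatorname{range}(M)$ genuinely puts the $u_i$ in isotropic position, and that $\alpha \le 1$ so that the slack constant $5$ swallows the quadratic error term (in fact $1+\sqrt2$ would already suffice).
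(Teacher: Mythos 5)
The paper does not actually prove Theorem~\ref{th:KS}; it cites it as Theorem~2 of \cite{srivastava2013discrepancy}, annotated ``implied by Corollary~1.5 of \cite{marcus2015interlacing},'' and thereafter uses it as a black box. Your proposal supplies precisely the derivation that annotation gestures at, and it is correct. All the delicate points are handled: each $v_i$ lies in $\operatorname{range}(M)$ so that $M^{1/2}M^{+/2}v_i = \Pi v_i = v_i$; the $u_i = M^{+/2}v_i$ are isotropic once you restrict to $\operatorname{range}(M)$ so MSS Corollary~1.5 applies with $r=2$ and $\delta = \alpha$; the lower Löwner bound comes for free from the complementary block $\Pi - \sum_{i\notin T_j}u_iu_i^T$; the observation $u_iu_i^T \preceq \Pi \preceq I$ gives $\alpha \le 1$, which is exactly what you need for $\alpha \le \sqrt{\alpha}$ in absorbing the quadratic term; and conjugation by $M^{1/2}$ transports the bounds back since it is order-preserving and sends $\Pi \mapsto M$, $u_iu_i^T \mapsto v_iv_i^T$. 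One small remark: your estimate $\sqrt{2}\sqrt{\alpha} + \alpha \le (1+\sqrt{2})\sqrt{\alpha}$ in fact yields the constant $1+\sqrt{2} \approx 2.41$, so the theorem's constant $5$ has considerable slack; that slack is already present in Srivastava's original statement and your derivation simply inherits it.
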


\noindent Intuitively, Theorem \ref{th:KS} claims that any group of vectors can be partitioned into two such that both subsets contribute approximately equally to the quadratic form of $v_1, \dots, v_m$ in any direction. Note that the approximation factor depends on $\alpha$, which measures the maximum fraction of the quadratic form that a single $v_i$ contributes. By taking the vectors $v_i$ in Theorem \ref{th:KS} to be the weighted incidence vectors $b_{ij}$ of some graph $G$ as in Definition \ref{def:lapl}, we recover a Laplacian relationship resembling Definition \ref{def:spec-approx}. Moreover, $\alpha$ is then exactly the maximum leverage score in the graph, where the leverage score $\ell_e \in [0,1]$ is a measure of the relative importance of edge $e$ in connecting its endpoints. We recall its formal definition below.

\begin{Definition}[Leverage Score]\normalfont
    For an edge $e = \{i,j\}$ in an undirected, weighted graph $G = (V, E, w)$ with Laplacian $L_G$, the \emph{leverage score} of $e$ is
    \[
    \ell_e = w_{ij} \cdot (\delta_i - \delta_j)^T L_G^+ (\delta_i - \delta_j) = b_{ij}^T \left(\sum_{\{i,j\} \in E} b_{ij} b_{ij}^T\right)^+ b_{ij},
    \]
\end{Definition}

In \cite{srivastava2013discrepancy}, Srivastava identifies this link between Theorem \ref{th:KS} and the definition of spectral sparsification, and then demonstrates how to repeatedly apply Theorem \ref{th:KS} to show Theorem \ref{th:BSS} for the special case of $G$ being the unweighted complete graph. In the proof, he exploits the fact that the leverage scores in the complete graph are very small (in fact, they are all equal to $2/n$) to maintain the approximation factor.

\subsection{Our Contribution.}
In this paper, we extend Srivastava's work by using Theorem \ref{th:KS} to derive Theorem \ref{th:BSS} in its full generality. In Section \ref{sec:n/m}, we explicitly show Srivastava's claim that his proof of linear-sized sparsifiers for the special case of the complete graph holds more generally for all undirected graphs with leverage scores bounded by $O(n/m)$ \cite{srivastava2013discrepancy}. These are graphs that do not contain any edges that are disproportionately important to the graph's structure. For an illustrative counterexample, consider the dumbbell graph, which is formed by connecting two complete graphs with a single edge; this middle edge has leverage score $1 = \omega(n/m)$ and is far more important to the graph than any of its other edges. The proof in this section, guided by \cite{srivastava2013discrepancy}, entails repeatedly applying Theorem 1.2 to partition the graph $G$ many times. This repetition is necessary since applying the result once only halves the number of edges in the worst case. The approximation factor, which --- as discussed earlier --- depends on the maximum leverage score, grows worse at each step, but we are able to maintain an appropriate bound until at least one subgraph has only a linear number of edges. Though we only aim to obtain one \emph{linear-sized} sparsifier, our method partitions the graph into many spectral approximations, each of which has edges uniformly reweighted.

In Section \ref{sec:gen}, we extend this approach for the case of any arbitrary undirected, weighted graph. The challenge with simply applying the strategy used in Section \ref{sec:n/m} is that in the general case we do not have any bounds on our leverage scores --- and thus our approximation factor. If we directly apply Theorem \ref{th:KS} to the dumbbell graph, for example, one side of the partition will be left disconnected and will thus necessarily be a poor spectral sparsifier, consistent with the fact that $\alpha=1$ and hence $1/2 - 5\sqrt{\alpha} < 0$ and $1/2 + 5\sqrt{\alpha} > 1$. We circumvent this challenge by constructing a recursive algorithm that repeatedly (i) divides edges with high leverage scores into multiple parallel edges, (ii) applies the bounded leverage score sparsification result from Section \ref{sec:n/m}, and then (iii) recombines parallel edges. As a result of this repeated division and recombination, we arrive at a final subgraph whose edges have been potentially reweighted non-uniformly, unlike the case of bounded leverage scores.

Our work serves to formalize the connection between the Kadison-Singer Conjecture as proved by MSS, and the BSS result that every graph has a linear-sized spectral sparsifier. While Srivastava \cite{srivastava2013discrepancy} had shown that the MSS theorem can be used to prove linear-sized sparsifiers for the unweighted, complete graph, our work builds nontrivially upon his proof to solidify the informal intuition that MSS is a strengthening of BSS more generally. We also hope that the argument presented here for the existence of linear-sized sparsifiers, while unable to match the ``twice-Ramanujan'' size bound proved by BSS,
is simpler to understand than the barrier function argument they presented, though of course it is based on the deep result of MSS. Moreover, our proof in Section \ref{sec:gen} does not depend on the MSS result directly, but rather demonstrates a technique to extend a sparsification result from graphs with bounded leverage scores to arbitrary weighted graphs, which may be of independent interest.

% Our proof may also be of independent interest since the extension to the general case shown in Section \ref{sec:gen} does not depend itself on the MSS result, but rather demonstrates a technique to extend a sparsification result for graphs with bounded leverage scores. A comparable result for the bounded leverage score case --- even if obtained in some different manner --- would extend similarly to the general case. 

% Finally, this methodology of employing MSS may be extensible to proving sparsity bounds for other stronger notions of spectral approximation for which linear-sized bounds are not known, such as the unit-circle approximation \cite{ahmadinejad2020high} or singular-value approximation \cite{ahmadinejad2023singular}. 

\subsection{Related Work.}

% While our paper is primarily concerned with this original Spielman and Teng definition of spectral sparsification for undirected graphs, many other notions have since been developed. Cohen et al.  \cite{cohen2017almost} generalize the definition for directed graphs, using their sparsification result to provide almost-linear time algorithms for solving directed Laplacian systems as well as computing a number of quantities associated with a random walk on a directed graph including mixing time and escape probabilities. More recently, the unit-circle approximation and singular value approximation have strengthened the original spectral approximation definition to offer further useful properties and have been utilized in state-of-the-art algorithms for estimating random walk probabilities in undirected and Eulerian graphs and in solving Laplacian systems for Eulerian graphs \cite{ahmadinejad2020high, ahmadinejad2023singular}. 

Following the MSS result, multiple surveys have been written in an attempt to enumerate the far-reaching consequences of their groundbreaking proof of the Kadison-Singer problem \cite{casazza2016consequences, harvey2013introduction, bownik2018kadison}. Along with Srivastava's discussion in \cite{srivastava2013discrepancy}, \cite{marcus2014ramanujan} also remarked on the similarity between the BSS and MSS papers, but their discussion focused mainly on the proof techniques rather than the results themselves.

% \section{Kadison-Singer for bounded leverage scores}
% \section{MSS implies BSS for bounded leverage scores}
\section{Linear-sized sparsifiers for bounded leverage scores}
\label{sec:n/m}

We dedicate Section \ref{sec:n/m} to proving the following theorem, which states the existence of linear-sized sparsifiers for graphs with bounded leverage scores. Our proof specifically utilizes the implication of the MSS result discussed in Theorem \ref{th:KS}.

\begin{theorem}
\label{th:ks-bss1}
    Given some $\epsilon$ such that $0 < \epsilon < 1$ and an undirected, weighted graph $G$ on $n$ vertices with $m$ edges whose leverage scores are bounded by $O(n/m)$, $G$ has a reweighted subgraph $H$ with $O(n/\epsilon^2)$ edges, such that $H$ is an $\epsilon$-spectral approximation of $G$.
\end{theorem}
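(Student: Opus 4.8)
The plan is to repeatedly apply Theorem~\ref{th:KS} to the weighted incidence vectors of the graph, each time discarding the larger half of the resulting partition, until the surviving subgraph has only $O(n/\epsilon^2)$ edges, and then to rescale it once uniformly. Concretely: if $m \le Cn/\epsilon^2$ for a suitable absolute constant $C$ (to be fixed in terms of the leverage-score constant), there is nothing to prove, so assume $m > Cn/\epsilon^2$ and set $k = \lceil \log_2(m\epsilon^2/(Cn)) \rceil$. Put $G_0 = G$; given $G_i$, apply Theorem~\ref{th:KS} to the vectors $\{b_e : e \in E(G_i)\}$, obtaining a partition of $E(G_i)$ into two parts, and let $G_{i+1}$ be the subgraph (with the original edge weights) on whichever part has at most $|E(G_i)|/2$ edges. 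After $k$ steps, $G_k$ has at most $m/2^k \le Cn/\epsilon^2$ edges. Writing $\alpha_i = \max_{e \in E(G_i)} b_e^{T} L_{G_i}^{+} b_e$ for the maximum leverage score of $G_i$ (this is exactly the quantity $\alpha$ in Theorem~\ref{th:KS} applied to $G_i$), each step gives $(\tfrac12 - 5\sqrt{\alpha_i}) L_{G_i} \preceq L_{G_{i+1}} \preceq (\tfrac12 + 5\sqrt{\alpha_i}) L_{G_i}$, and multiplying these together gives $\beta_i L_G \preceq L_{G_i} \preceq \gamma_i L_G$ with $\beta_i = \prod_{j < i}(\tfrac12 - 5\sqrt{\alpha_j})$ and $\gamma_i = \prod_{j<i}(\tfrac12 + 5\sqrt{\alpha_j})$, provided all the factors stay positive.

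The crux is to control how fast $\alpha_i$ grows, and I expect this bootstrap to be the main obstacle: nothing obvious prevents $\alpha_i$ from exceeding the value $\tfrac{1}{100}$ at which $\tfrac12 - 5\sqrt{\alpha_i}$ becomes non-positive and Theorem~\ref{th:KS} turns vacuous. Since $E(G_i) \subseteq E(G)$ with the same weights we have $L_{G_i} \preceq L_G$, which together with $L_{G_i} \succeq \beta_i L_G$ and $\beta_i > 0$ forces $L_{G_i}$ and $L_G$ to have the same kernel, hence $L_{G_i}^{+} \preceq \beta_i^{-1} L_G^{+}$ on their common image; combined with the hypothesis $\ell_e \le cn/m$ this gives, for every $e \in E(G_i)$, $b_e^{T} L_{G_i}^{+} b_e \le \beta_i^{-1} b_e^{T} L_G^{+} b_e \le \beta_i^{-1} \cdot cn/m$. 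So if we set $a_0 = cn/m$ and $a_{i} = a_0/\beta_{i}$ then $\alpha_i \le a_i$ and $a_{i+1} = a_i/(\tfrac12 - 5\sqrt{\alpha_i}) \le a_i/(\tfrac12 - 5\sqrt{a_i})$, equivalently $a_{i+1}/2^{i+1} \le (a_i/2^i)(1 - 10\sqrt{a_i})^{-1}$. The observation that rescues the argument is that the $a_i$ roughly double, so $\sqrt{a_i}$ grows geometrically with ratio about $\sqrt2$ and $\sum_{j < i}\sqrt{a_j}$ is dominated by its last term $\sqrt{a_i}$; moreover our choice of $C$ guarantees $a_0 2^k = (cn/m)\,2^k = O(\epsilon^2)$, so $\sqrt{a_i} = O(\epsilon)$ for every $i \le k$. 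I would make this precise by induction on $i$: let $i^\star \le k$ be largest with $a_j \le 2 \cdot 2^j a_0$ for all $j \le i^\star$; on this range $\sqrt{a_j} \le \sqrt{2}\,\sqrt{a_0 2^j}$, so $\sum_{j \le i^\star}\sqrt{a_j} = O(\sqrt{a_0 2^k}) = O(\epsilon)$, whence $\prod_{j \le i^\star}(1 - 10\sqrt{a_j})^{-1} \le \exp(O(\sum_{j\le i^\star}\sqrt{a_j})) \le 2$ once $C$ is large enough, forcing $a_{i^\star+1} \le 2 \cdot 2^{i^\star+1} a_0$ and hence $i^\star = k$. In particular $a_i < \tfrac{1}{100}$ for all $i \le k$, so every factor above is positive and the telescoped inequality is valid, and $\sum_{j<k}\sqrt{\alpha_j} \le \sum_{j<k}\sqrt{a_j} = O(\epsilon)$.

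To finish, note $\gamma_k/\beta_k = \prod_{j<k}\frac{1 + 10\sqrt{\alpha_j}}{1 - 10\sqrt{\alpha_j}} \le \exp\!\big(O(\sum_{j<k}\sqrt{\alpha_j})\big) = 1 + O(\epsilon)$, where (by taking $C$ large enough) the implied constant is as small as we like. Let $H$ be $G_k$ with every edge weight multiplied by $\lambda := (\beta_k \gamma_k)^{-1/2}$, so that $L_H = \lambda L_{G_k}$ and $\lambda \beta_k L_G \preceq L_H \preceq \lambda \gamma_k L_G$ with $\lambda \gamma_k = \sqrt{\gamma_k/\beta_k} \le 1 + O(\epsilon)$ and $\lambda \beta_k = \sqrt{\beta_k/\gamma_k} \ge 1 - O(\epsilon)$. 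Choosing $C$ so that these $O(\epsilon)$ terms are at most $\epsilon$, we conclude that $H$ is a reweighted subgraph of $G$ with at most $Cn/\epsilon^2 = O(n/\epsilon^2)$ edges that is an $\epsilon$-spectral approximation of $G$, as required.
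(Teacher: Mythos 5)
Your proposal is correct and follows essentially the same strategy as the paper: repeatedly apply Theorem~\ref{th:KS}, track the geometric growth of the maximum leverage score, and show the accumulated approximation factor is controlled by a geometric series in $\sqrt{\ell_i}$ dominated by its largest term (which is $O(\epsilon)$), then reweight uniformly. The bookkeeping differs slightly --- you bound $\alpha_i$ by comparing $L_{G_i}^+$ directly to $L_G^+$ via the cumulative factor $\beta_i$ and bootstrap the bound $a_j \le 2\cdot 2^j a_0$ from the upper-growth side, whereas the paper iterates a one-step bound (Lemma~\ref{lem:levbounds}) and sums the series from the top using the lower-growth bound $\ell_i \ge \tfrac32\ell_{i-1}$ (Corollary~\ref{cor:geom}) --- but the two arguments are mathematically parallel and land in the same place.
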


% The core idea of the proof is to represent each edge $(i, j) \in E$ with weight $w_{ij}$ as the vector $b_{ij} \coloneqq \sqrt{w_{ij}}(\delta_i - \delta_j)$. Since the outer products of the $b_{ij}$'s sum to the Laplacian of the graph with corresponding edges, applying Theorem \ref{th:KS} to the vectors $\{b_{ij}:(i, j) \in E\}$ gives two edge-disjoint subgraphs $H_1$ and $H_2$ of $G$ where each edge of $G$ is in one of $H_1$ and $H_2$ and
% \begin{align}
% \left(\frac{1}{2} - 5\sqrt{\alpha}\right)L_G \preceq L_{H_j} \preceq \left(\frac{1}{2} + 5\sqrt{\alpha}\right)L_G.
% \end{align}
% Furthermore, the definition of $\alpha$ also has a graph theoretic interpretation. Since $$\left(\sum_{ij:~(i,j)\in E} b_{ij}b_{ij}^T\right)^{+} = L_G^+,$$ $\alpha$ is the maximum leverage score of any edge in $G$. Therefore, we have shown the following:

As shown by Srivastava in \cite{srivastava2013discrepancy}, we begin by applying Theorem \ref{th:KS} to the weighted incidence vectors $b_{ij}$ of the given graph $G$ in order to obtain a Laplacian relationship with approximation factor equal to the maximum leverage score in the graph. We write the outcome formally below.
% the outcome formally in Lemma \ref{lem:ks-impl}.

\begin{lemma}\label{lem:ks-impl} Given an undirected graph $G=(V,E)$ on $n$ vertices with $m$ edges whose leverage scores are bounded by $\ell$, there exists a partition $E_1 \sqcup E_2 = E$ of the edges of $G$ such that both subgraphs $H_1=(V, E_1)$ and $H_2=(V,E_2)$ satisfy
\[
\left(\frac{1}{2} - 5\sqrt{\ell}\right)L_G \preceq L_{H_j} \preceq \left(\frac{1}{2} + 5\sqrt{\ell}\right)L_{G},
\]
for $j = 1,2$.
\end{lemma}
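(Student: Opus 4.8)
The plan is to deduce this directly from Theorem~\ref{th:KS} by instantiating it with the weighted incidence vectors of $G$. Enumerate the edges as $e_1, \dots, e_m$ (treating $G$ as a multigraph, so parallel edges correspond to distinct indices) and set $v_k \coloneqq b_{e_k}$, where $b_{e_k} = \sqrt{w_{ij}}(\delta_i - \delta_j)$ for $e_k = \{i,j\}$. By Definition~\ref{def:lapl}, $\sum_{k=1}^m v_k v_k^T = L_G$, and by the definition of the leverage score, $v_k^T \big(\sum_{k=1}^m v_k v_k^T\big)^+ v_k = b_{e_k}^T L_G^+ b_{e_k} = \ell_{e_k}$. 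Hence the quantity $\alpha$ appearing in Theorem~\ref{th:KS} is exactly $\max_k \ell_{e_k}$, the maximum leverage score of $G$, which is at most $\ell$ by hypothesis.

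Applying Theorem~\ref{th:KS} to $v_1, \dots, v_m$ yields a partition $T_1 \sqcup T_2 = [m]$; transporting it back through the enumeration gives a partition $E_1 \sqcup E_2 = E$ with $\sum_{k \in T_j} v_k v_k^T = L_{H_j}$ for $j = 1,2$. Theorem~\ref{th:KS} then directly gives
\[
\left(\tfrac{1}{2} - 5\sqrt{\alpha}\right) L_G \preceq L_{H_j} \preceq \left(\tfrac{1}{2} + 5\sqrt{\alpha}\right) L_G .
\]
It remains only to relax $\alpha$ to the weaker bound $\ell$. Since $L_G \succeq 0$ and $\sqrt{\ell} \ge \sqrt{\alpha}$, the matrix $\big(5\sqrt{\ell} - 5\sqrt{\alpha}\big) L_G$ is positive semidefinite, so $\big(\tfrac{1}{2} + 5\sqrt{\alpha}\big) L_G \preceq \big(\tfrac{1}{2} + 5\sqrt{\ell}\big) L_G$ and $\big(\tfrac{1}{2} - 5\sqrt{\ell}\big) L_G \preceq \big(\tfrac{1}{2} - 5\sqrt{\alpha}\big) L_G$; chaining these with the displayed inequalities gives the claim.

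I do not anticipate a genuine obstacle here, as essentially all the content sits in Theorem~\ref{th:KS}: the lemma is just its specialization to incidence vectors, combined with the observation that the parameter $\alpha$ there coincides with the maximum leverage score of $G$. The only points requiring a little care are the bookkeeping between the index set $[m]$ and the edge set $E$, and the monotonicity of the Löwner order under scaling by the positive semidefinite matrix $L_G$, which is what licenses the final step replacing $\sqrt{\alpha}$ by $\sqrt{\ell}$ (and remains valid even when $L_G$ is rank-deficient, since the pseudoinverse is used throughout).
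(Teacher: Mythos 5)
Your proposal is correct and follows exactly the route the paper intends: the paper explicitly presents Lemma~\ref{lem:ks-impl} as the result of applying Theorem~\ref{th:KS} to the weighted incidence vectors $b_{ij}$, with $\alpha$ identified as the maximum leverage score, and leaves the routine verification implicit. Your write-up fills in that verification faithfully, including the minor but worthwhile observation that relaxing $\alpha$ to the upper bound $\ell$ is licensed by $L_G \succeq 0$.
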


As discussed in the introduction, this result in and of itself is not enough to show a linear-sized sparsifier, but as long as we can maintain an appropriate bound on the leverage scores of the subgraphs, we can continue to partition each subgraph recursively through repeated application of Lemma \ref{lem:ks-impl}. This allows us to obtain sufficiently small subgraphs while still ensuring that our final approximation factor is good. 

We start by showing a relationship between the maximum leverage scores of the partitioned subgraphs at each level of recursion. The statement itself was given in \cite{srivastava2013discrepancy} by Srivastava, though without the detailed proof we provide.
\begin{lemma} \label{lem:levbounds}
    Define $\ell_i$ to be the maximum leverage score among all edges in the $2^i$ subgraphs of $G$ obtained after $i$ recursive applications of Lemma \ref{lem:ks-impl}. Assuming $\ell_{i-1}$ is sufficiently small, we have
    \begin{align}
    2\left(e^{10\sqrt{\ell_{i-1}}}\right) \ell_{i-1} 
    &\geq \left(\frac{1}{2} - 5\sqrt{\ell_{i-1}}\right)^{-1}\ell_{i-1} \label{eq:leftreclevedit} \\
    &\geq \ell_i \geq  \label{eq:centerreclevedit}\\ 
    & \left(\frac{1}{2} + 5\sqrt{\ell_{i-1}}\right)^{-1}\ell_{i-1}
    \geq \frac{3}{2}\cdot \ell_{i-1} \label{eq:rightreclevedit}.
    \end{align}
\end{lemma}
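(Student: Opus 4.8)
The two inner estimates \eqref{eq:centerreclevedit} carry all the content; the outer estimates \eqref{eq:leftreclevedit} and \eqref{eq:rightreclevedit} are elementary scalar facts. The plan is to follow a single edge through one level of the recursion and transfer the spectral comparison supplied by Lemma \ref{lem:ks-impl} into a comparison of leverage scores. The one ingredient needed beyond the excerpt is L\"{o}wner-order antitonicity of the Moore--Penrose pseudoinverse: if $A \succeq B \succeq 0$ are symmetric with $\ker A = \ker B$, then $B^{+} \succeq A^{+}$; this follows by restricting both to their common range, where they are positive definite, and using that $t \mapsto t^{-1}$ is operator antitone, together with $(cM)^{+} = c^{-1} M^{+}$ for $c > 0$. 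Writing $b_e$ for the incidence vector of an edge $e$, the leverage score of $e$ in a graph $K$ is $\ell_K(e) = b_e^{T} L_K^{+} b_e$, so any pseudoinverse inequality becomes a leverage-score inequality upon conjugating by $b_e$.

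\emph{Upper bound in \eqref{eq:centerreclevedit}.} Fix one of the $2^{i}$ level-$i$ subgraphs $H$, with level-$(i-1)$ parent $G'$. By definition of $\ell_{i-1}$, all leverage scores of $G'$ are at most $\ell_{i-1}$, so Lemma \ref{lem:ks-impl} applied to $G'$ with bound $\ell_{i-1}$ gives $(\tfrac12 - 5\sqrt{\ell_{i-1}}) L_{G'} \preceq L_{H} \preceq (\tfrac12 + 5\sqrt{\ell_{i-1}}) L_{G'}$. Since $\ell_{i-1}$ is small enough that $\tfrac12 - 5\sqrt{\ell_{i-1}} > 0$, this two-sided bound forces $\ker L_{H} = \ker L_{G'}$: the upper inequality gives $\ker L_{G'} \subseteq \ker L_{H}$ and the strictly positive lower inequality gives the reverse inclusion. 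Hence $L_{H}^{+} \preceq (\tfrac12 - 5\sqrt{\ell_{i-1}})^{-1} L_{G'}^{+}$, and conjugating by $b_e$ for an arbitrary edge $e$ of $H$ and using $\ell_{G'}(e) \le \ell_{i-1}$ yields $\ell_{H}(e) \le (\tfrac12 - 5\sqrt{\ell_{i-1}})^{-1} \ell_{i-1}$. Taking the maximum over all edges of all $2^{i}$ level-$i$ subgraphs gives $\ell_i \le (\tfrac12 - 5\sqrt{\ell_{i-1}})^{-1} \ell_{i-1}$.

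\emph{Lower bound in \eqref{eq:centerreclevedit}.} Pick a level-$(i-1)$ subgraph $G^{*}$ attaining the maximum, i.e.\ with an edge $e^{*}$ satisfying $\ell_{G^{*}}(e^{*}) = \ell_{i-1}$, and let $H_1$ be the half of the Lemma \ref{lem:ks-impl} partition of $G^{*}$ containing $e^{*}$. The identical kernel argument and the other side of the pseudoinverse fact give $L_{H_1}^{+} \succeq (\tfrac12 + 5\sqrt{\ell_{i-1}})^{-1} L_{G^{*}}^{+}$, whence $\ell_{H_1}(e^{*}) \ge (\tfrac12 + 5\sqrt{\ell_{i-1}})^{-1} \ell_{i-1}$. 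Since $H_1$ is one of the level-$i$ subgraphs, $\ell_i \ge \ell_{H_1}(e^{*}) \ge (\tfrac12 + 5\sqrt{\ell_{i-1}})^{-1} \ell_{i-1}$. It remains to note that \eqref{eq:rightreclevedit} is equivalent to $\tfrac12 + 5\sqrt{\ell_{i-1}} \le \tfrac23$, i.e.\ $\sqrt{\ell_{i-1}} \le \tfrac1{30}$, while \eqref{eq:leftreclevedit} is an elementary estimate on $(\tfrac12 - 5\sqrt{\ell_{i-1}})^{-1}$ against the Taylor expansion of the exponential; both hold once $\ell_{i-1}$ lies below an absolute constant, which is what ``sufficiently small'' quantifies.

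The step needing the most care is the pseudoinverse bookkeeping: the tempting one-line inference ``$L_{H} \succeq c\, L_{G'} \implies L_{H}^{+} \preceq c^{-1} L_{G'}^{+}$'' is false without control on the kernels, so the argument genuinely uses the full two-sided bound of Lemma \ref{lem:ks-impl} (not merely the side being applied) to guarantee $\ker L_{H} = \ker L_{G'}$. The remaining pieces --- choosing the extremal parent for the lower bound, tracking which half $e^{*}$ lands in, and the two scalar inequalities --- are routine.
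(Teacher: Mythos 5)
Your treatment of the central inequalities \eqref{eq:centerreclevedit} matches the paper's argument --- L\"owner antitonicity of the pseudoinverse followed by conjugation by incidence vectors --- and you supply two details the paper elides: the kernel-equality justification for inverting the L\"owner order, and the extremal parent/edge bookkeeping for the lower bound on $\ell_i$. These are correct and genuinely tighten the paper's terse write-up.

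You waved off the scalar bound \eqref{eq:leftreclevedit} too quickly, however. The underlying claim $(1-x)^{-1} \leq e^{x}$ (applied with $x = 10\sqrt{\ell_{i-1}}$) is false for every $x \in (0,1)$: the function $(1-x)e^{x}$ has derivative $-x e^{x} < 0$ and so strictly decreases from $1$, giving $(1-x)^{-1} > e^{x}$ whenever $x > 0$. Thus \eqref{eq:leftreclevedit} fails for \emph{every} $\ell_{i-1} > 0$, contrary to your assertion that it holds once $\ell_{i-1}$ is below an absolute constant. The paper itself commits the same mistake, writing $\frac{1}{1-x} \leq 1 + x$, which is reversed for $x > 0$. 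The damage is slight --- the relative margin is $O(\ell_{i-1})$, and the bound is repaired by a modestly larger exponent (e.g.\ $(1-x)^{-1} \leq e^{2x}$ for $x \leq \frac12$ gives $2e^{20\sqrt{\ell_{i-1}}}\ell_{i-1}$ on the left), with the constants in Corollary \ref{cor:geom} and Lemma \ref{lem:recappedit} adjusting accordingly --- but a blind proof should catch this rather than assert it as elementary.
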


\begin{proof}
    The inequalities in (\ref{eq:centerreclevedit}) can be derived almost directly from the statement in Lemma \ref{lem:ks-impl}. If $A$ and $B$ share the same nullspace, then $A \preceq B$ implies $A^{+} \succeq B^{+}$, and thus for $j=1,2$,
    \begin{align*}
    \left(\frac{1}{2} - 5\sqrt{\alpha}\right)^{-1}L^{+}_G \succeq L^{+}_{H_j} \succeq \left(\frac{1}{2} + 5\sqrt{\alpha}\right)^{-1}L^{+}_G.
    \end{align*}
    In particular, the inequality
    \begin{align*}
    \left(\frac{1}{2} - 5\sqrt{\alpha}\right)^{-1}x^TL^{+}_Gx \geq x^TL^{+}_{H_j}x \geq \left(\frac{1}{2} + 5\sqrt{\alpha}\right)^{-1}x^TL^{+}_Gx
    \end{align*}
    holds for all $x \in \mathbb{R}^n$, and thus also holds for the $b_{ij}$'s. Applying this result to the graphs obtained after $i$ levels of partitioning, we have
    \[
    \left(\frac{1}{2} - 5\sqrt{\ell_{i-1}}\right)^{-1}\ell_{i-1} \geq \ell_i \geq \left(\frac{1}{2} + 5\sqrt{\ell_{i-1}}\right)^{-1}\ell_{i-1}
    \] as desired. Moving on to the inequality in (\ref{eq:leftreclevedit}), we have by Taylor series expansion that $\frac{1}{1-x} \leq 1 + x \leq e^x$ and therefore 
    \[ \left(\frac{1}{2} - 5\sqrt{\ell_{i-1}}\right)^{-1} = 2(1 - 10\sqrt{\ell_{i-1}})^{-1}\leq 2e^{10\sqrt{\ell_{i-1}}}. \] Finally, to show (\ref{eq:rightreclevedit}), since $\sqrt{\ell_{i-1}}$ is sufficiently small (in particular $5\sqrt{\ell_{i-1}} \leq 1/6$), we can bound \[\left(\frac{1}{2} + 5\sqrt{\ell_{i-1}}\right)^{-1}\ell_{i-1} \geq \frac{3}{2}\ell_{i-1}.\]
\end{proof}

\noindent A useful corollary follows immediately from Lemma \ref{lem:levbounds}:
\begin{corollary} \label{cor:geom}
    The following holds for all integers $k \geq 1$:
    \[
    \sum_{i=0}^{k} \sqrt{\ell_i} \leq (3 + \sqrt{6})\sqrt{\ell_k}.
    \]
\end{corollary}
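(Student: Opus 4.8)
The plan is to exploit only the rightmost chain of inequalities from Lemma~\ref{lem:levbounds}, namely $\ell_i \geq \frac{3}{2}\ell_{i-1}$, which tells us that the maximum leverage scores grow at least geometrically with ratio $3/2$ as we descend the recursion. Equivalently, $\ell_{i-1} \leq \frac{2}{3}\ell_i$, and by iterating this bound downward from level $k$ we get $\ell_i \leq \left(\frac{2}{3}\right)^{k-i}\ell_k$ for every $0 \leq i \leq k$. Taking square roots yields $\sqrt{\ell_i} \leq \left(\sqrt{2/3}\right)^{k-i}\sqrt{\ell_k}$.

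Next I would sum this over $i = 0, \dots, k$. Reindexing with $j = k - i$, the sum $\sum_{i=0}^k \sqrt{\ell_i}$ is bounded above by $\sqrt{\ell_k}\sum_{j=0}^{k}\left(\sqrt{2/3}\right)^{j}$, which in turn is at most $\sqrt{\ell_k}\sum_{j=0}^{\infty}\left(\sqrt{2/3}\right)^{j} = \frac{\sqrt{\ell_k}}{1 - \sqrt{2/3}}$ since $\sqrt{2/3} < 1$.

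Finally I would simplify the constant $\frac{1}{1 - \sqrt{2/3}}$. Writing $\sqrt{2/3} = \sqrt{6}/3$, the denominator becomes $\frac{3 - \sqrt{6}}{3}$, so the constant equals $\frac{3}{3 - \sqrt{6}}$, and rationalizing by multiplying numerator and denominator by $3 + \sqrt{6}$ gives $\frac{3(3 + \sqrt{6})}{9 - 6} = 3 + \sqrt{6}$. This is exactly the claimed bound, so no slack is lost in the constant. There is no real obstacle here — the only things to be slightly careful about are checking that Lemma~\ref{lem:levbounds} genuinely applies at every level (i.e., that $\ell_{i-1}$ stays "sufficiently small" throughout, which is maintained inductively in the surrounding argument) and confirming that the geometric series manipulation lands precisely on $3 + \sqrt{6}$ rather than merely something of that order.
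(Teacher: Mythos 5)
Your proof is correct and follows essentially the same route as the paper: both use inequality (\ref{eq:rightreclevedit}) to deduce $\ell_i \leq (2/3)^{k-i}\ell_k$, take square roots, and bound the resulting sum by the infinite geometric series $\sum_{j\geq 0}(\sqrt{2/3})^j = 3 + \sqrt{6}$. Your explicit rationalization of the constant is a nice touch that the paper leaves implicit, but the argument is identical in substance.
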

\begin{proof}
    We can use (\ref{eq:rightreclevedit}) to bound the sum of the leverage scores with an infinite geometric series. That is,
    \[
        \sum_{i=0}^k \sqrt{\ell_i} \leq \sum_{i=0}^k \sqrt{(2/3)^{k-i}\ell_{k}} \leq \left(\frac{1}{1 - \sqrt{2/3}}\right)\sqrt{\ell_{k}} = (3 + \sqrt{6})\sqrt{\ell_k}
    \]
    where the last inequality follows from the convergence of the infinite geometric series with ratio $\sqrt{2/3} < 1$.
\end{proof}

\begin{Remark}
Before we continue the proof of Theorem \ref{th:ks-bss1}, we make a few remarks about Lemma \ref{lem:levbounds}. Having an upper bound on the growth of the leverage scores after each recursive partitioning step is necessary in bounding the final approximation factor since at each step $i$, our single-step approximation factor is a function of $\ell_i$. We could have used a similar technique as we did in showing inequality (\ref{eq:rightreclevedit}) to show a looser upper bound of, say, $3\ell_{i-1}$, but having a multiplicative factor of $2$ in (\ref{eq:leftreclevedit}) is essential to ensuring that our allowed recursive depth is enough to create linear-sized sparsifiers. In order to achieve the factor of 2, we need an additional factor of $e^{10\sqrt{\ell_{i-1}}}$ as well. These ultimately accumulate in the final approximation bound, incentivizing the lower bound in (\ref{eq:rightreclevedit}) that we use to bound the sum of the leverage scores in Corollary \ref{cor:geom}.
% which we use to bound the sum of the leverage scores with a geometric series as show in Corollary \ref{cor:geom}.
\end{Remark}

Importantly, Lemma \ref{lem:levbounds} holds only for sufficiently small $\ell_{i-1}$. Given a tight bound $\gamma$ and a looser one $\delta$ on the maximum leverage scores of the original graph, we show that a recursive depth of $t = \log(1/\gamma) - \log(2/\delta)$ will maintain the looser bound of $\delta$ for each subsequent maximum leverage score as well. In order to prove Theorem \ref{th:ks-bss1}, we will ultimately apply this bound with $\gamma = \Theta(n/m)$ and $\delta = \Theta(\epsilon^2)$, but we state it more generally here so that it can be used in Section \ref{sec:gen}.

\begin{lemma} \label{lem:recappedit}
    Let $\delta$ and $\gamma$ both be at most $(\ln 2 / 10c)^2$ and assume $\ell_0 \leq \gamma \leq \delta$. Then, we can partition $G$ via Lemma \ref{lem:ks-impl} recursively for $t = \log(1/\gamma) - \log (2/\delta)$ steps while maintaining $\ell_i \leq \delta$ for all $i \in [t]$.
\end{lemma}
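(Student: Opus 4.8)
The plan is to prove the statement by induction on the recursion depth $i$, establishing the slightly self-referential claim that $\ell_i \le \delta$ for every $i \in \{0, 1, \ldots, t\}$. The base case $i = 0$ is exactly the hypothesis $\ell_0 \le \gamma \le \delta$. For the inductive step, assume $\ell_j \le \delta$ for all $j < i$. Since $\delta$ is sufficiently small, each such $\ell_j$ is then small enough for Lemma~\ref{lem:levbounds}, and hence Corollary~\ref{cor:geom}, to be applicable at levels $0, \ldots, i-1$. Iterating the upper bound~(\ref{eq:leftreclevedit}) of Lemma~\ref{lem:levbounds} from level $i$ down to level $0$ gives
\[
\ell_i \;\le\; \prod_{j=0}^{i-1} 2e^{10\sqrt{\ell_j}} \cdot \ell_0 \;=\; 2^{i}\,\exp\!\Big(10\sum_{j=0}^{i-1}\sqrt{\ell_j}\Big)\,\ell_0 .
\]

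Next I would bound the two factors $2^i$ and $\exp\!\big(10\sum_{j<i}\sqrt{\ell_j}\big)$ separately. For the exponent, Corollary~\ref{cor:geom} gives $\sum_{j=0}^{i-1}\sqrt{\ell_j} \le (3 + \sqrt{6})\sqrt{\ell_{i-1}}$ (the case $i=1$ being trivial), and the inductive hypothesis $\ell_{i-1}\le\delta$ then bounds this by $(3+\sqrt 6)\sqrt\delta$; crucially this is a constant multiple of $\sqrt\delta$ \emph{independent of $i$}, which a crude bound such as $i\sqrt\delta$ would fail to give. For the power of two, $i \le t = \log(1/\gamma) - \log(2/\delta) = \log\frac{\delta}{2\gamma}$ (base $2$) yields $2^i \le \frac{\delta}{2\gamma}$. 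Plugging both in along with $\ell_0 \le \gamma$,
\[
\ell_i \;\le\; \frac{\delta}{2\gamma}\cdot e^{10(3+\sqrt 6)\sqrt\delta}\cdot \gamma \;=\; \frac{\delta}{2}\, e^{10(3+\sqrt 6)\sqrt\delta} \;\le\; \delta,
\]
where the last inequality holds because $\delta$ is small enough that $e^{10(3+\sqrt 6)\sqrt\delta} \le 2$. This closes the induction and proves the lemma.

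The one delicate point, and the main thing to get right, is the circularity in this argument: applying Lemma~\ref{lem:levbounds} and Corollary~\ref{cor:geom} at the lower levels requires their ``sufficiently small'' hypotheses to hold there, which is part of what is being proved. Phrasing the argument as an induction with hypothesis ``$\ell_j \le \delta$ for all $j < i$'' resolves this: the hypothesis supplies the smallness needed to run the earlier lemmas at level $i$, and the conclusion re-establishes the same bound at level $i$. The reason the loop closes rather than drifting upward is the deliberate cushion built into the definition of $t$: capping the recursion at depth $t$ and the factor $2$ hidden in the $-\log(2/\delta)$ term together make the final estimate equal to $\frac{\delta}{2}$ times a factor that stays below $2$ for all small $\delta$, hence strictly below $\delta$. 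I would also remark that the proof only uses $\gamma \le \delta$; the tightness of the bound $\gamma$ on $\ell_0$ plays no role here and matters only later, through the number $2^t$ of subgraphs this process generates.
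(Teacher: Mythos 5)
Your proof is correct and follows essentially the same route as the paper's: an induction in which the upper bound (\ref{eq:leftreclevedit}) is iterated from level $i$ down to $0$ to get $\ell_i \le 2^i \exp\bigl(10\sum_{j<i}\sqrt{\ell_j}\bigr)\ell_0$, Corollary~\ref{cor:geom} collapses the exponent to $O(\sqrt{\delta})$, and $2^i \le 2^t = \delta/(2\gamma)$ cancels against $\ell_0\le\gamma$ to leave $(\delta/2)\,e^{10c\sqrt\delta}\le\delta$. Your version is in fact slightly more careful than the paper's: you state the induction hypothesis as the strong form ``$\ell_j \le \delta$ for all $j < i$'' (which is what is really needed, both to iterate (\ref{eq:leftreclevedit}) and to invoke Corollary~\ref{cor:geom} at the intermediate levels), whereas the paper phrases it as ``assume $\ell_{k-1} \le \delta$''; you also explicitly address the apparent circularity and explain why the $-\log(2/\delta)$ cushion in the definition of $t$ is what closes the loop.
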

\begin{proof}
    We prove the statement by induction. The base case, that $\ell_0 \leq \delta$, is an assumption of the Lemma statement. For our inductive step, assume that $\ell_{k-1} \leq \delta$. Then, using Lemma \ref{lem:levbounds} --- specifically inequality (\ref{eq:leftreclevedit}) --- and Corollary \ref{cor:geom}, we obtain
    \begin{align}
    \ell_k &\leq 2^k  \exp\left(10\sum_{i=0}^{k-1} \sqrt{\ell_i}\right) \ell_0 \nonumber
    \\ &\leq \exp\left(10c\sqrt{\delta}\right)\cdot 2^k \cdot\ell_0, \label{eq:indstepedit}
    \end{align}
    where $c = 3 + \sqrt{6}$. Given our assumption that $k \leq \log 1/\gamma - \log 2/\delta$, we can simplify (\ref{eq:indstepedit}):
    \[
    \ell_k \leq e^{10c\sqrt{\delta}} \cdot \frac{1}{\gamma} \cdot \ell_0 \cdot \frac{\delta}{2} \leq \frac{e^{10c\sqrt{\delta}}}{2} \cdot \delta \leq \delta, 
    \]
    where the second inequality follows since $\ell_0 \leq \gamma$ and the last by our assumption that $\delta \leq (\ln 2/10c)^2$.
\end{proof}

The proof for Theorem \ref{th:ks-bss1} follows quickly from the results already shown. Note that for the remainder of the paper, the constant $c$ will refer to $c = 3+\sqrt{6}$ as defined in the previous proof. 

\begin{proof}[Proof of Theorem \ref{th:ks-bss1}]
Recall that we are given an $\epsilon$ such that $0 < \epsilon < 1$ and the maximum leverage score of the original graph $G$ is $O(n/m) = \rho n/m$ for some constant $\rho$. Note that $m = \omega(n/\epsilon^2)$ (since if $m = O(n/\epsilon^2)$ we would already have an appropriately sized graph), so assuming large $n$, we can bound $\ell_0 \leq \rho n/m \leq \delta$ for $\delta = \min\{(\epsilon/10c)^2, (\ln 2/10c)^2\}$. Note that since $\epsilon < 1$ we have $\delta = \Theta(\epsilon^2).$

Thus, we can apply Lemma \ref{lem:recappedit} to partition the graph $t = \log(m/\rho n) - \log(2/\delta)$ times with $\ell_t \leq \delta$. This gives $(m/\rho n) \cdot \delta/2$ subgraphs, each a result of repeated approximations of the original graph $G$. Since the edge sets of the subgraphs are disjoint and the total number of edges sums to $m$, the smallest subgraph must be of size at most $\rho n \cdot 2/\delta = O(n/\epsilon^2)$. Let this graph be $H$. We note now that, up to a constant factor, the statements
\begin{equation*}
    (1- \epsilon)L_G \preceq L_H \preceq (1+\epsilon) 
    \quad\text{and}\quad 
    e^{-\epsilon}L_G \preceq L_H \preceq e^{\epsilon}L_G
\end{equation*}
are equivalent. Therefore, ignoring the additional constant factor, the Löwner inequality from Lemma \ref{lem:ks-impl} implies that
% Then, using the Löwner inequality from Lemma \ref{lem:ks-impl} and the $e$ definition of spectral approximation from (\ref{eq:e-approx}), we have
\begin{align*}
    2^t \cdot L_H &\preceq \exp\left( \sum_{i=0}^t 10 \sqrt{\ell_i}\right)L_G \preceq \exp\left(10c\sqrt{\delta}\right)L_G \preceq e^{\epsilon}L_G,
\end{align*}
where $\sum_{i=0}^t \sqrt{\ell_i} \leq  c\sqrt{\ell_t} \leq c\sqrt{\delta}$ by Corollary \ref{cor:geom} and Lemma \ref{lem:recappedit}. An analogous argument gives us that
\[
 2^t \cdot L_H \succeq \exp\left(-10 \sum_{i=0}^t \sqrt{\ell_i}\right)L_G \succeq e^{-\epsilon}L_G,
\]
and hence
\[
e^{-\epsilon}L_G \preceq 2^t \cdot L_H \preceq e^{\epsilon}L_G.
\]
Reweighting the edges of $H$ by a factor of $2^t = O(m\epsilon^2/n)$ thus gives a reweighted subgraph $H'$ of size $O(n/\epsilon^2)$ that is an $\epsilon$-approximation of $G$.
\end{proof}

\section{Extension to general case} \label{sec:gen}
% \pnote{Again, I don't really see a reason why we need $\epsilon$ to be really small. $\delta$ certainly needs to be small, but $\epsilon$ is $> 150\sqrt{\delta}$. We do have a large constant $\beta$ which isn't great but again this has more to do with assuming that $n$ is large than that $\epsilon$ is small.}

In this section, we extend our proof of linear-sized spectral sparsifiers to the general case in which we have no bound on the leverage scores. That is, we prove the following:
\begin{theorem} [Restatement of Theorem \ref{th:BSS}] \label{th:ks-bssgen}
Given an undirected, weighted graph $G$ on $n$ vertices and $m$ edges and an $\epsilon$ such that $0 < \epsilon < 1$, there exists a reweighted subgraph $H$ of $G$ with $O(n/\epsilon^2)$ edges such that $H$ is an $\epsilon$-spectral approximation of $G$.
\end{theorem}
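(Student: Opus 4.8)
The plan is to reduce the general case to Theorem~\ref{th:ks-bss1} by a splitting trick: replace each edge by enough parallel copies so that every leverage score becomes small, sparsify the resulting multigraph with Theorem~\ref{th:ks-bss1}, and then merge the surviving parallel copies back together. Everything rests on one elementary observation: if an edge $e=\{i,j\}$ of weight $w_e$ is replaced by $k$ parallel edges each of weight $w_e/k$, the Laplacian is unchanged, since $k\cdot(w_e/k)(\delta_i-\delta_j)(\delta_i-\delta_j)^T = w_e(\delta_i-\delta_j)(\delta_i-\delta_j)^T$; hence $L_{G'}=L_G$, the leverage score of each new copy of $e$ is exactly $\ell_e/k$, and the leverage scores of all other edges are untouched. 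So splitting drives down large leverage scores at will, at the cost of increasing the edge count, and this is exactly what is needed to escape the dumbbell-type obstruction discussed after Theorem~\ref{th:KS}.

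Concretely, I would first dispense with the trivial case $m=O(n/\epsilon^2)$, in which $G$ is already the desired sparsifier, and assume $m=\omega(n/\epsilon^2)$. Fix a threshold $\tau\coloneqq n/m$ and, for each edge $e$ of $G$, replace $e$ by $k_e\coloneqq\lceil \ell_e/\tau\rceil$ parallel copies of weight $w_e/k_e$, obtaining a multigraph $G'$ with $L_{G'}=L_G$ and every leverage score at most $\tau$. To bound the new edge count $m'=\sum_e k_e$, I use the identity $\sum_{e\in E}\ell_e=\operatorname{tr}\!\left(L_G^{+}\sum_e b_eb_e^T\right)=\operatorname{tr}(L_G^{+}L_G)=\operatorname{rank}(L_G)\le n$, which gives
\[
m'=\sum_{e}\left\lceil\frac{\ell_e}{\tau}\right\rceil\le\sum_e\left(\frac{\ell_e}{\tau}+1\right)=\frac{1}{\tau}\sum_e\ell_e+m\le\frac{n}{\tau}+m=2m.
\]
Thus $m\le m'\le 2m$, so $\tau=n/m\le 2n/m'$; i.e., $G'$ has $m'$ edges with all leverage scores bounded by $O(n/m')$, and still $m'=\omega(n/\epsilon^2)$, so $G'$ meets the hypotheses of Theorem~\ref{th:ks-bss1}.

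Applying Theorem~\ref{th:ks-bss1} to $G'$ yields a reweighted subgraph $H'$ of $G'$ with $O(n/\epsilon^2)$ edges and $(1-\epsilon)L_{G'}\preceq L_{H'}\preceq(1+\epsilon)L_{G'}$; since $L_{G'}=L_G$, $H'$ is an $\epsilon$-spectral approximation of $G$. Finally I recombine: each edge of $G'$ is a copy of a unique edge of $G$, so define $H$ on $V$ by assigning each original edge $e$ the total weight, in $H'$, of the copies of $e$ that survive into $H'$ (weight $0$ if none do). Parallel rank-one terms add, so $L_H=L_{H'}$, while $|E(H)|\le|E(H')|=O(n/\epsilon^2)$ and $E(H)\subseteq E(G)$; hence $H$ is a reweighted subgraph of $G$ of the required size that $\epsilon$-approximates $G$, as claimed.

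The only delicate point is the choice of splitting parameter: $G'$ must simultaneously have leverage scores $O(n/m')$ (so Theorem~\ref{th:ks-bss1} applies with a bounded constant) and retain superlinearly many edges, and these constraints pull against each other as $\tau$ varies. Taking $\tau$ of order $n/m$ and invoking $\sum_e\ell_e=\operatorname{rank}(L_G)\le n$ threads this needle in a single step; an alternative, if one wishes to avoid that identity, is to split in rounds—roughly halving the maximum leverage and re-sparsifying each time, terminating after $O(\log(m\epsilon^2/n))$ rounds once one checks the edge count does not blow up—but the one-shot version above is cleanest. All remaining facts (that splitting and recombination preserve $L_G$, and that an $\epsilon$-approximation of $G'$ is one of $G$) are immediate from Definition~\ref{def:lapl}.
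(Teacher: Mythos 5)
Your proof is correct and takes a genuinely different, and notably simpler, route than the paper's. The paper also splits high-leverage edges into parallel copies, but it splits every bad edge (those with $\ell_e > 3n/\hat{m}$) into the same fixed number $\hat{m}/3n$ of copies. This uniform splitting blows up the edge count so badly that after one round of sparsification the paper can only guarantee roughly $\hat{m}/3 + 6n/\delta$ edges (Theorem~\ref{th:rsboundcase}) --- still proportional to the original size --- forcing an iterative algorithm (Algorithm~\ref{alg:1}) with careful tracking of auxiliary quantities $\hat{m}_i$ and a geometric schedule of approximation parameters $\delta_i = \delta_T/2^{T-i}$, plus three supporting lemmas to verify the schedule is feasible, the sizes shrink, and the approximation errors sum correctly. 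Your observation that one should split edge $e$ into $\lceil \ell_e/\tau\rceil$ copies --- proportional to its own leverage score --- combined with the identity $\sum_e \ell_e = \operatorname{rank}(L_G) \le n$, keeps the total edge count bounded by $2m$, so the hypotheses of Theorem~\ref{th:ks-bss1} are met immediately with $m' = \Theta(m)$, and a single application plus recombination finishes the proof. This eliminates the paper's entire iterative machinery (Theorem~\ref{th:rsboundcase}, Algorithm~\ref{alg:1}, Lemmas~\ref{lem:deltas}--\ref{lem:e-approx}) in favor of a one-shot reduction. The only thing the paper's route buys is that its splitting ratio is decoupled from the exact leverage scores via $\hat{m}$, but your argument shows that decoupling is unnecessary. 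One small point worth stating explicitly when writing this up: Theorem~\ref{th:ks-bss1} and its ingredients (Lemmas~\ref{lem:ks-impl}, \ref{lem:levbounds}, \ref{lem:recappedit}) must be read as applying to weighted multigraphs, since $G'$ is one; this is unproblematic because the underlying Theorem~\ref{th:KS} is a statement about arbitrary collections of vectors, and indeed the paper itself implicitly applies these lemmas to multigraphs in its proof of Theorem~\ref{th:rsboundcase}.
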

\noindent We will use another recursive algorithm to obtain our sparsifier based on the following Theorem:
\begin{theorem} \label{th:rsboundcase}
    Let $G = (V, E)$ be an undirected, weighted graph on $n$ vertices and $m$ edges and define some $\hat{m}, ~\delta$ such that $\hat{m} \geq m$ and $3n/\hat{m} \leq \delta \leq (\ln 2/10c)^2$. Then, there exists a reweighted subgraph $H$ of $G$ of size at most $\hat{m}/3 + 6n/\delta$ that is a $10c\sqrt{\delta}$ spectral approximation of $G$.
\end{theorem}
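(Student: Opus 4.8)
The plan is to reduce Theorem~\ref{th:rsboundcase} to the bounded-leverage-score result of Section~\ref{sec:n/m} via an edge-splitting trick. The key observation is that replacing an edge $e = \{i,j\}$ of weight $w_e$ by $k$ parallel copies of weight $w_e/k$ does not change the Laplacian (hence not $L_G^+$, nor the leverage score of any other edge), since
\[
\sum_{l=1}^{k}\frac{w_e}{k}(\delta_i-\delta_j)(\delta_i-\delta_j)^T = w_e(\delta_i-\delta_j)(\delta_i-\delta_j)^T ;
\]
consequently each new copy has leverage score exactly $\ell_e/k$. So I would fix a target $\gamma := \delta/12$ and build the multigraph $G'$ from $G$ by splitting each edge $e$ into $k_e := \lceil \ell_e/\gamma\rceil$ parallel copies of weight $w_e/k_e$ (edges with $\ell_e \le \gamma$ are left untouched, $k_e = 1$). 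Then $L_{G'} = L_G$ and every edge of $G'$ has leverage score at most $\gamma < \delta$.

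I would then bound the size of $G'$. Using $k_e \le \ell_e/\gamma + 1$ when $\ell_e > \gamma$ and $k_e = 1$ otherwise, together with the fact that the leverage scores of a graph on $n$ vertices sum to at most $n-1$,
\[
|E(G')| = \sum_e k_e \;\le\; m + \frac{1}{\gamma}\sum_{e : \ell_e > \gamma}\ell_e \;\le\; m + \frac{n-1}{\gamma} \;\le\; \hat m + \frac{12n}{\delta} .
\]
Since $G'$ has leverage scores bounded by $\gamma = \delta/12 \le \delta$ with both $\gamma$ and $\delta$ sufficiently small, Lemma~\ref{lem:recappedit} applies to $G'$: as $\log(1/\gamma) - \log(2/\delta) = \log 6 > 2$, we may partition $G'$ recursively via Lemma~\ref{lem:ks-impl} for $t = 2$ rounds while keeping the maximum leverage $\ell_i \le \delta$ at every level $i \in [t]$. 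This produces $2^t = 4$ pairwise edge-disjoint subgraphs of $G'$; I let $H''$ be the one with the fewest edges, so $|E(H'')| \le |E(G')|/4 \le \hat m/4 + 3n/\delta$.

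To finish, I would assemble the sparsifier from $H''$. Chaining the Löwner inequalities of Lemma~\ref{lem:ks-impl} along the recursion path reaching $H''$ exactly as in the proof of Theorem~\ref{th:ks-bss1}, and using Corollary~\ref{cor:geom} with Lemma~\ref{lem:recappedit} to bound $\sum_{i=0}^{t}\sqrt{\ell_i} \le c\sqrt{\ell_t} \le c\sqrt\delta$, yields $e^{-10c\sqrt\delta}L_{G'} \preceq 2^{t}L_{H''} \preceq e^{10c\sqrt\delta}L_{G'}$. Since $L_{G'} = L_G$, scaling the weights of $H''$ by $2^t = 4$ and then collapsing any surviving parallel copies of a common original edge back into a single (now possibly non-uniformly) reweighted edge — an operation that leaves the Laplacian unchanged — produces a reweighted subgraph $H$ of $G$ with $|E(H)| \le |E(H'')| \le \hat m/4 + 3n/\delta \le \hat m/3 + 6n/\delta$, satisfying $e^{-10c\sqrt\delta}L_G \preceq L_H \preceq e^{10c\sqrt\delta}L_G$; this is a $10c\sqrt\delta$-spectral approximation of $G$, up to the constant-factor equivalence between $e^{\pm\epsilon}$ and $1\pm\epsilon$ noted in the proof of Theorem~\ref{th:ks-bss1}.

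The conceptual heart of the argument is the splitting trick — turning a graph with arbitrary leverage scores into one with tiny leverage scores without touching the Laplacian — after which the bounded-leverage machinery of Section~\ref{sec:n/m} does all the work. If there is an obstacle it is purely quantitative: $\gamma$ must be a small enough constant fraction of $\delta$ (we take $\gamma = \delta/12$, which both forces leverage below $1/100$ and makes Lemma~\ref{lem:recappedit} permit depth $t = 2$) so that the post-split, post-partition count $|E(G')|/2^t \le \hat m/4 + 3n/\delta$ falls under $\hat m/3 + 6n/\delta$; here the additive blow-up from splitting is only $(n-1)/\gamma$ thanks to the leverage-score sum bound, and the hypothesis $\delta \ge 3n/\hat m$ is what makes the stated bound a genuine size reduction when $\hat m$ is large. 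Two secondary checks are needed: that the hypotheses of Lemma~\ref{lem:recappedit} carry over to $G'$ (immediate, since $\ell_0(G') \le \gamma = \delta/12 \le \delta$), and that no subgraph produced in the recursion is empty (also immediate, since every leverage score stays below $1/100$, forcing each $L_{H_j}$ in Lemma~\ref{lem:ks-impl} to be nonzero, indeed positive definite on the range of $L_G$).
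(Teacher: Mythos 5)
Your proof is correct and arrives at the same theorem via the same high-level strategy as the paper --- split high-leverage edges into parallel copies to obtain a multigraph $G'$ with $L_{G'}=L_G$ and small leverage scores, run the bounded-leverage machinery of Lemma~\ref{lem:recappedit}/Corollary~\ref{cor:geom}, take a small subgraph, and recombine parallel copies --- but the quantitative choices are genuinely different and, if anything, cleaner. The paper fixes the set $S=\{e:\ell_e>3n/\hat m\}$ and splits every bad edge into exactly $\hat m/3n$ copies (so that $\gamma=3n/\hat m$), which inflates $|E(G')|$ to as much as $\hat m + \hat m^2/9n$ and therefore requires a partition depth $t=\log(\hat m/3n)-\log(2/\delta)$ depending on $\hat m$; the edge count of the selected subgraph $H$ is then controlled by separately bounding its good edges (at most $6n/\delta$, by choosing the subgraph with fewest good edges) and its bad edges (at most $|S|\le\hat m/3$, after recombination, using that leverage scores sum to $n-1$). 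You instead split each edge into $\lceil\ell_e/\gamma\rceil$ copies with $\gamma=\delta/12$, which caps $|E(G')|\le\hat m + 12n/\delta$ using the same leverage-score-sum bound, so that a fixed depth $t=2$ already yields a quarter-sized piece $\le\hat m/4+3n/\delta$, slightly better than the claimed $\hat m/3+6n/\delta$; your use of $\hat m$ is purely for bookkeeping ($m\le\hat m$) and you do not need the hypothesis $\delta\ge 3n/\hat m$ except to ensure the bound is actually a reduction. The one small piece of sloppiness --- passing from $\prod(1-10\sqrt{\ell_i})$ to $e^{-10\sum\sqrt{\ell_i}}$ in the lower Löwner bound, which is only true up to replacing $10$ by a slightly larger constant --- is inherited from the paper's own proof of Theorem~\ref{th:ks-bss1} and is harmless given the stated ``up to a constant factor'' caveat there.
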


The proof of Theorem \ref{th:rsboundcase} is based largely on the result of Lemma \ref{lem:recappedit} and the proof of Theorem \ref{th:ks-bss1}. Since we no longer have a bound on the leverage scores of $G$, we are going to consider separately the ``good edges'' (those that can be bounded by 
% $O(n/\hat{m}$)) 
$3n/\hat{m}$)
and the ``bad edges'' (those that cannot). As expected, the application of Lemma \ref{lem:recappedit} will make the number of good edges linear. Unfortunately, we have no such guarantee for our bad edges, but since we cannot have too many of them, the sparsity of the final graph does not suffer too much. The quantity $\hat{m}$ is not important to the proof itself, but will be helpful later when we define our algorithm (see Remark \ref{rem:hats}).

\begin{proof}[Proof of Theorem \ref{th:rsboundcase}]
    % Define $S \subset E$ to be the set of $\hat{m}/3$ edges with the highest leverage scores. Then, note that if $e \in E - S$, then $\ell_e \leq 3n/\hat{m}$. Otherwise, we would be left with $\sum_{e \in S} \ell_e > (3 n/\hat{m}) \cdot (\hat{m}/3) = n > n-1$, which is a contradiction since the leverage scores of every graph sum to $n-1$.
    
    % Thus, we can construct a graph $G'$ with leverage scores bounded by $3n/\hat{m}$ by splitting each edge $e = \{a,b\} \in S$ into $\hat{m}/3n$ parallel edges of equal weight such that the total weight of the edges from $a$ to $b$ remains the same \pnote{we might want to add that by definition, the maxmimum lev. score of any edge is 1}. Note that the edges in $E - S$ are unchanged. 

    % $G'$ thus has leverage scores bounded by $3n/\hat{m} \leq \delta$. Invoking Lemma \ref{lem:recappedit}, we can partition $G$ recursively for $t = \log(\hat{m}/3n) - \log(2/\delta)$ steps to obtain $\hat{m}/3n \cdot \delta/2$ subgraphs with $\ell_t \leq \delta$. Since $|E - S| \leq m \leq \hat{m}$, the subgraph with the fewest edges from the set $|E - S|$ must have at most $6n/\delta$ edges from $E - S$. Let this subgraph be $H = (V, E_H)$. In the worst case, $H$ has at least one copy of each edge from the set $S$. Recombining the parallel edges that came from $S$ by adjusting the weights, we have 
    Define $S = \{e \in E : \ell_e > 3n/\hat{m}\}$ to be the set of bad edges --- those with too large leverage scores. Then, we construct a new graph $G'$ by splitting each edge $e = \{a,b\} \in S$ into $\hat{m}/3n$ parallel edges of equal weight such that the total weight of the edges from $a$ to $b$ remains the same. Note that the maximum leverage score of any single edge in the original graph $G$ is 1, and thus each edge  in $G'$ will have leverage score bounded by $3n/\hat{m} \leq \delta$.

    Invoking Lemma \ref{lem:recappedit}, we can partition $G$ recursively for $t = \log(\hat{m}/3n) - \log(2/\delta)$ steps to obtain $\hat{m}/3n ~\cdot~ \delta/2$ subgraphs with $\ell_t \leq \delta$. Since $|E - S| \leq m \leq \hat{m}$, the subgraph with the fewest edges from the set $|E - S|$ must have at most $6n/\delta$ edges from $E - S$. Let this subgraph be $H = (V, E_H)$. In the worst case, $H$ has at least one copy of each edge from the set $S$, but we can quickly bound $|S| \leq \hat{m}/3$ by contradiction: if we had $|S| > \hat{m}/3$, then $\sum_{e \in S} \ell_e > (3n/\hat{m}) \cdot (\hat{m}/3) = n > n-1$, which is a contradiction since the leverage scores of every graph sum to $n-1$. Therefore, recombining the parallel edges that came from $S$ by adjusting weights, we have
    \[
        |E_H| \leq |S| + \frac{6n}{\delta} \leq \frac{\hat{m}}{3} + \frac{6n}{\delta}.
    \]
    Using an analogous argument as in our proof of Theorem \ref{th:ks-bss1}, if we reweight each (potentially recombined edge) by a factor of $2^t$, we obtain the Löwner order inequality:
    % by reweighting each (potentially recombined edge) with a factor of $2^t$ and noting that $\sum_{i=0}^t \sqrt{\ell_i} \leq c\sqrt{\ell_t} \leq c\sqrt{\delta}$ by Corollary \ref{cor:geom}, we can show the following Löwner order inequality:
    \[
    \exp\left(-10\sum_{i=0}^t \sqrt{\ell_i}\right) L_G \preceq 2^t \cdot L_{H} \preceq \exp\left(10\sum_{i=0}^t \sqrt{\ell_i}\right) L_G. \\
     \]
     Noting from Corollary \ref{cor:geom} that $\sum_{i=0}^t \sqrt{\ell_i} \leq c\sqrt{\ell_t} \leq c\sqrt{\delta}$, this implies
     \[
     e^{-10c\sqrt{\delta}}L_G \preceq L_{H} \preceq e^{10c\sqrt{\delta}}L_G. 
    \]
\end{proof}

% \begin{Remark}\label{rem:mhat-bound}
%     Using the same justification for the bound on $|E_H|$ given above, we can also achieve the same upper bound for the larger edge set $|E_H| + |S-E_H| = |E_H \cup S|$. As was shown, the number of edges in $E_H$ from $E-S$ is bounded above by $6n/\delta$, so we still have $|E_H \cup S| \leq |S| + 6n/\delta = \hat{m}/3 + 6n/\delta.$
% \end{Remark}

\noindent The next step is to apply Theorem \ref{th:rsboundcase} algorithmically. 
\begin{algorithm} \label{alg:1}
Given an $\epsilon$ such that $0 < \epsilon < 1$ and undirected, weighted graph $G = (V, E)$ where $|V| = n$ and $|E| = m$,
    \begin{enumerate}
    \item Define $G_0 = G$ and $m_0 = \hat{m}_0 = m$. \label{en:1}
    \item For a graph $G_i = (V, E_i)$ with $m_i$ edges and our chosen $\hat{m}_i \geq m_i$, choose some $\delta_i$ that satisfies the inequality $3n/\hat{m}_i \leq \delta_i \leq (\ln 2/10c)^2$. Then, use Theorem \ref{th:rsboundcase} to obtain $G_{i+1} = (V, E_{i+1})$ of size $m_{i+1} \leq \hat{m}_i/3 + 6n/\delta_i$ that is a $10c\sqrt{\delta_i}$ spectral approximation of $G_i$. Finally, for use in the next iteration, define $\hat{m}_{i+1} = \hat{m}_i/3 + 6n/\delta_i \geq m_{i+1}$. \label{it:2}
    % \item Let $S_i$ be the $\hat{m_i}/3$ edges with the highest leverage scores in $G_i$. Then, we define $\hat{m}_{i+1} = m_{i+1} + |S_i - E_{i+1}|$. That is, $\hat{m}_{i+1}$ is the size of our graph had we included all the ``bad edges'' from the previous iteration. As discussed in Remark \ref{rem:mhat-bound}, the bound $\hat{m}_{i+1} \leq \hat{m}_i/3 + 6n/\delta_i$ also holds. \label{it:3}
    \item We repeat this process for at most $T+1$ steps where $T = \log_{3} (m/n) - \log_{3}(1/\epsilon^2)$, terminating early if we reach at most $\beta n/\epsilon^2$ edges at any point for some constant $\beta$ defined explicitly at the end of Lemma \ref{lem:deltas}. Call this final graph $G_{T'}$.
    \end{enumerate}
\end{algorithm}  

We now need to prove that our final graph $G_{T'}$ (where $T' \leq T+1$) satisfies the appropriate size and approximation criteria. In order to do so, we will require careful selection of the $\delta_i$'s. Since the approximation bound at the $i^{th}$ step of our recursive algorithm is $\Theta(\sqrt{\delta_i})$, the final approximation of bound for $G_{T'}$ will be $\Theta(\sum_{i=0}^{T'-1} \sqrt{\delta_i})$. Thus, just as we showed a geometric series for the $\ell_i$'s in Corollary \ref{cor:geom}, we will make a similar argument for our $\delta_i$'s. Ultimately, we will define $\delta_T$ in terms of our constant $\epsilon$ and obtain a geometric series through explicit definition of the preceding $\delta_i$'s. This means, however, that $\delta_i$ is no longer a constant, so we must carefully argue that at every step before our termination point $T'$ we maintain $\delta_i \geq 3n/\hat{m}_i$, in order to justify applying Theorem \ref{th:rsboundcase}. It is easy check that each step is $\delta_i \leq (\ln 2/10c)^2$ as well.

\begin{Remark} \label{rem:hats}
    % It is in the proof above that $\hat{m}_i$ becomes necessary because of the inverse relationship between our approximation factor and graph size. As our graph size ($m_i$) shrinks, the approximation factor in the next step ($\delta_i$) necessarily gets worse. It is imperative, then, to not only demonstrate that at each step we are making enough progress, but also that we never make too much progress and lose control of our approximation bound. The definition of the $\hat{m_i}$ allows us to show a lower bound on the size of our next graph, which is critical to Equation (\ref{eq:deltas}).
    
    It is in the following proof of Lemma \ref{lem:deltas} that the need for the $\hat{m}_i$'s becomes clear due to the inverse relationship between our approximation factor and the graph size. 
    % It is in the proof of Lemma \ref{lem:deltas}  that the $\hat{m}_i$'s become necessary due to the inverse relationship between our approximation factor and the graph size. 
    As our graph size ($m_i$) shrinks, the approximation factor in the next step ($\delta_i$) necessarily gets worse. The introduction of the $\hat{m_i}$'s, which represent our graph size in the worst case, allows us to tie our approximation factor to a more stable quantity than the graph size itself --- which can change unpredictably from one iteration to the next. Though this does slow down the graph sparsification, the fixed evolution of the $\hat{m_i}$'s is necessary in reasoning about our explicitly chosen approximation factors.

    % \footnote{This is also why we redefined $S$ such that $|S_i| = \hat{m_i}/3$ rather than the definition used in our draft and in the presentation. We are again sacrificing the speed at which we sparsify the graph in order to maintain control over the approximation factor at each step. Note that we also redefined our final $T$ when working out the details of this proof.} 
\end{Remark}

\begin{lemma} \label{lem:deltas}
    Choose $\delta_{T} = (\epsilon/10c')^2$ for $c' = c \cdot (2 + \sqrt{2})$ and $\delta_i = \delta_T/2^{T-i}$ when running Algorithm \ref{alg:1}. Then, for $i < T'$, we indeed satisfy $\delta_i \geq 3n/\hat{m}_i$.
\end{lemma}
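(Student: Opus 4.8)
The plan is to prove the equivalent statement $\hat m_i\,\delta_i \ge 3n$ and to do so by induction on $i$ over the full range $0 \le i \le T$; since Algorithm \ref{alg:1} performs at most $T+1$ iterations we have $T' \le T+1$, so this settles every index $i < T'$ named in the lemma. Throughout I would treat $\hat m_i$ as the quantity generated by the recursion $\hat m_0 = m$, $\hat m_{i+1} = \hat m_i/3 + 6n/\delta_i$ from Step \ref{it:2}, which coincides with the algorithm's $\hat m_i$ whenever iteration $i$ is actually reached. Apart from this recursion the only ingredient I need is the identity $\delta_{i+1} = 2\delta_i$, immediate from $\delta_i = \delta_T\cdot 2^{-(T-i)}$.

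The inductive step is the easy half. Assuming $\hat m_i\,\delta_i \ge 3n$, I would just multiply the two recursions:
\[
\hat m_{i+1}\,\delta_{i+1} \;=\; \left(\frac{\hat m_i}{3} + \frac{6n}{\delta_i}\right)\cdot 2\delta_i \;=\; \frac{2}{3}\,\hat m_i\delta_i + 12n \;\ge\; 2n + 12n \;=\; 14n \;\ge\; 3n ,
\]
so the invariant is preserved (indeed it only becomes slacker, drifting toward the fixed point $36n$), and no smallness assumption on $\delta_i$ enters here.

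The genuine work — and the step that pins down the constants $\beta$ and $c'$ — is the base case $i=0$, i.e.\ $m\,\delta_0 \ge 3n$. Setting $y \coloneqq m\epsilon^2/n$ we have $T = \log_3 y$, hence $\delta_0 = \delta_T\cdot 2^{-T} = \delta_T\,y^{-\log_3 2}$, and with $\delta_T = (\epsilon/10c')^2$ a short computation gives
\[
m\,\delta_0 \;=\; \frac{ny}{\epsilon^2}\cdot\delta_T\,y^{-\log_3 2} \;=\; \frac{n\,\delta_T}{\epsilon^2}\,y^{\,1-\log_3 2} \;=\; \frac{n}{(10c')^2}\,y^{\log_3(3/2)} ,
\]
using $1 - \log_3 2 = \log_3(3/2)$. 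Thus $m\,\delta_0 \ge 3n$ is equivalent to $y \ge \bigl(3(10c')^2\bigr)^{\log_{3/2}3}$, so I would fix the termination constant to be exactly $\beta = \bigl(3(10c')^2\bigr)^{\log_{3/2}3}$ — a genuine constant, since $c' = c(2+\sqrt2)$ with $c = 3+\sqrt6$. I would then split into cases: if $m \le \beta n/\epsilon^2$ then Algorithm \ref{alg:1} stops immediately, $T' = 0$, and the lemma is vacuous; if $m > \beta n/\epsilon^2$ then $y > \beta$, the base case holds, $T > 0$, and the induction gives $\hat m_i\,\delta_i \ge 3n$ for all $0 \le i \le T$. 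As a byproduct, $\delta_i \le \delta_T = (\epsilon/10c')^2$ for all such $i$, so each $\delta_i$ is automatically as small as the hypotheses of Theorem \ref{th:rsboundcase} require. The one thing to be careful with is the bookkeeping that the $\delta_i$ are chosen by formula rather than adaptively, so that $\hat m_i$ — and hence the very constraint $\delta_i \ge 3n/\hat m_i$ being checked — depends only on the already-fixed $\delta_0,\dots,\delta_{i-1}$, with no circularity.
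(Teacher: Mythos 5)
Your proof is correct, and it takes a genuinely different route from the paper's. The paper derives a closed form $\delta_i = \delta_T\left(n/(m_0/3^i)\right)^{1/\log_2 3}(1/\epsilon^2)^{1/\log_2 3}$, bounds $\hat m_i \geq m_0/3^i$ from the recursion, and then for \emph{each} $i < T'$ invokes the termination condition $\hat m_i > \beta n/\epsilon^2$ to close the gap, leaving $\beta$ as an implicit constant subject to $\beta^{1 - 1/\log_2 3} > 300(c')^2$. You instead establish the multiplicative invariant $\hat m_i\,\delta_i \geq 3n$ by induction: the inductive step is pure algebra on the two recursions ($\hat m_{i+1}\delta_{i+1} = \tfrac{2}{3}\hat m_i\delta_i + 12n \geq 14n$, independent of any smallness or termination assumptions), and only the base case $i = 0$ needs the termination threshold. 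A nice byproduct is that you pin $\beta$ down exactly to $\bigl(300(c')^2\bigr)^{\log_{3/2}3}$, which saturates the paper's constraint with equality (and this suffices, since the lemma's hypothesis $i < T'$ gives the strict inequality $m > \beta n/\epsilon^2$ at the base). Your version exploits the full additive recursion $\hat m_{i+1} = \hat m_i/3 + 6n/\delta_i$ rather than the weaker one-sided bound $\hat m_i \geq m_0/3^i$, which is why the $\beta$-dependence collapses into a single place; the paper's version, by contrast, needs the termination condition at every index. Both are valid proofs.
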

\begin{proof}       
    By our definitions of $\delta_i$ and $T$, we can write 
    \[
    \delta_i = \frac{\delta_T}{2^{T-i}} = 2^i \cdot \delta_T \cdot \left(\frac{n}{m_0}\right)^{1/\log_2 3} \cdot \left(\frac{1}{\epsilon^2}\right)^{1/\log_2 3} = \delta_T \cdot \left(\frac{n}{m_0/3^i}\right)^{1/\log_2 3} \cdot \left(\frac{1}{\epsilon^2}\right)^{1/\log_2 3}.
    \]
    Note that since $\hat{m}_{i} \geq \hat{m}_{i-1}/3$
    % We now note that since $\hat{m}_{i} \geq \frac{\hat{m}_{i-1}}{3}$
    by definition, we can bound
    \[\hat{m_i} \geq \frac{\hat{m}_{0}}{3^i} = \frac{m_0}{3^i},\]
    and therefore
    \begin{equation}\label{eq:deltas}
    \delta_i = \delta_T \cdot \left(\frac{n}{m_0/3^i}\right)^{1/\log_2 3} \cdot \left(\frac{1}{\epsilon^2}\right)^{1/\log_2 3} \geq \delta_T \cdot \left(\frac{n}{\hat{m}_i}\right)^{1/\log_2 3} \cdot \left(\frac{1}{\epsilon^2}\right)^{1/\log_2 3}.
    \end{equation}
    Finally, to complete our proof, we aim to show that 
    \[
    \frac{3n}{\hat{m}_i} \leq \delta_T \cdot \left(\frac{n}{\hat{m}_i}\right)^{1/\log_2 3} \cdot \left(\frac{1}{\epsilon^2}\right)^{1/\log_2 3} \leq \delta_i.
    \]        
    Substituting our choice of $\delta_T = (\epsilon/10c')^2$, this is equivalent to showing 
    \[
        \left(\frac{n}{\hat{m}_i}\right)^{1-1/\log_2 3} \leq \left(\frac{1}{c'\sqrt{300}}\right)^2 \cdot \left(\epsilon^2\right)^{1 - 1/\log_2 3}.
    \]
    We note now that since $i < T'$, we have $\hat{m}_i \geq m_i > \beta n/\epsilon^2$ (otherwise our algorithm would have already terminated), and thus
    \[
        \left(\frac{n}{\hat{m}_i}\right)^{1-1/\log_2 3} < \left(\frac{\epsilon^2}{\beta}\right)^{1-1/\log_2 3} \leq  \left(\frac{1}{c'\sqrt{300}}\right)^2 \cdot \left(\epsilon^2\right)^{1 - 1/\log_2 3},
    \]
    where the last inequality holds if we define our constant $\beta$ such that $\beta^{1 - 1/\log_2 3} > 300(c')^2$.
\end{proof}

Now that we have confirmed that our choices of $\delta_i = \delta_T/2^{T-i}$ and $\delta_T = (\epsilon / 10c')^2$ are valid, we can use these particular $\delta_i$'s in Algorithm \ref{alg:1}. We will now show that the number of edges in $G_{T'}$ is linear, which will follow mainly from the recursive definition $\hat{m}_{i+1} = \hat{m}_i/3 + 6n/\delta$.

\begin{lemma}\label{lem:linear-general} 
    Suppose we choose $\delta_i$ for $i \in [T]$ as in Lemma \ref{lem:deltas}. Then, the final graph $G_{T'}$ returned by Algorithm \ref{alg:1} has $m_{T'} = O(n/\epsilon^2)$ edges.
\end{lemma}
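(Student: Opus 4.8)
The plan is to track the recursion $\hat{m}_{i+1} = \hat{m}_i/3 + 6n/\delta_i$ explicitly, using the chosen values $\delta_i = \delta_T/2^{T-i}$ with $\delta_T = (\epsilon/10c')^2$, and show that $\hat{m}_{T'} = O(n/\epsilon^2)$; since $m_{T'} \le \hat{m}_{T'}$, this gives the claim. First I would unroll the recursion: by induction,
\[
\hat{m}_k = \frac{\hat{m}_0}{3^k} + 6n\sum_{j=0}^{k-1} \frac{1}{3^{\,k-1-j}\,\delta_j}.
\]
The first term $\hat{m}_0/3^k = m/3^k$ is at most $(1/\epsilon^2)\cdot n$ once $k \ge T = \log_3(m/n) - \log_3(1/\epsilon^2)$, so at $k = T+1$ (or any earlier termination point $T'$, where we stop precisely because $m_{T'} \le \beta n/\epsilon^2$) this term is $O(n/\epsilon^2)$ — in fact at $k=T$ it is exactly $n/\epsilon^2$.

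The substantive step is bounding the sum $6n\sum_{j=0}^{k-1} \frac{1}{3^{k-1-j}\delta_j}$. Substituting $\delta_j = \delta_T\, 2^{-(T-j)} = \delta_T\, 2^{-(T-k)}\,2^{-(k-j)}$, the $j$-th term becomes $\frac{6n}{\delta_T}\,2^{T-k}\cdot \frac{2^{k-j}}{3^{k-1-j}} = \frac{6n}{\delta_T}\,2^{T-k}\cdot 3\cdot (2/3)^{k-1-j}$. Summing the geometric series in $k-1-j$ from $0$ to $k-1$ gives a factor at most $\frac{1}{1-2/3} = 3$, so the whole sum is at most $\frac{54 n}{\delta_T}\,2^{T-k}$. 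At $k = T$ this is $\frac{54n}{\delta_T} = 54n\cdot(10c'/\epsilon)^2 = O(n/\epsilon^2)$, and for $k = T+1$ it is only smaller. Combining the two pieces, $\hat{m}_T = O(n/\epsilon^2)$ and hence $\hat{m}_{T'} = O(n/\epsilon^2)$ for any $T' \le T+1$, since $\hat{m}_{i+1} \le \hat{m}_i/3 + 6n/\delta_i$ and $6n/\delta_i$ is increasing in $i$ but bounded by $6n/\delta_T = O(n/\epsilon^2)$ throughout.

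The main obstacle — and the reason the $\hat{m}_i$'s were introduced rather than working with the $m_i$'s directly — is making sure the geometric-series estimate actually closes: the per-step additive cost $6n/\delta_j$ grows geometrically (since $\delta_j$ halves as $j$ decreases from $T$), so a priori the sum could blow up, but it is exactly counterbalanced by the $3^{-(k-1-j)}$ decay from the repeated division by $3$, leaving net ratio $2/3 < 1$. I would also need to handle the early-termination case carefully: if the algorithm stops at $T' < T+1$ because $m_{T'} \le \beta n/\epsilon^2$, the bound is immediate; otherwise $T' = T+1$ and the unrolled bound above applies. Finally, one should note that Lemma \ref{lem:deltas} guarantees $\delta_i \ge 3n/\hat{m}_i$ at every step $i < T'$, so each application of Theorem \ref{th:rsboundcase} inside Algorithm \ref{alg:1} is legitimate, which is what licenses the recursion $\hat{m}_{i+1} = \hat{m}_i/3 + 6n/\delta_i$ in the first place.
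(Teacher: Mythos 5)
Your proof takes essentially the same route as the paper: split on whether the algorithm terminated early, and in the $T' = T+1$ case unroll the recurrence $\hat{m}_{i+1} = \hat{m}_i/3 + 6n/\delta_i$, substitute $\delta_j = \delta_T / 2^{T-j}$, and observe that the per-step additive contributions form a geometric series with ratio $2/3 < 1$, giving $\hat{m}_{T+1} = O(n/\delta_T) = O(n/\epsilon^2)$.

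Two small slips worth fixing. First, the arithmetic: $\frac{2^{k-j}}{3^{k-1-j}} = 2\,(2/3)^{k-1-j}$, not $3\,(2/3)^{k-1-j}$, so your constant should be $36$ rather than $54$; this does not affect the asymptotics. Second, your closing sentence claims $\hat{m}_{T'} = O(n/\epsilon^2)$ for \emph{every} $T' \le T+1$ on the grounds that $6n/\delta_i$ is increasing in $i$ and bounded by $6n/\delta_T$. That is backwards: since $\delta_i = \delta_T/2^{T-i}$ is increasing in $i$, the quantity $6n/\delta_i$ is \emph{decreasing}, and for small $i$ it can be much larger than $6n/\delta_T$; indeed $\hat{m}_0 = m$ need not be $O(n/\epsilon^2)$ at all. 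Fortunately this extra sentence is redundant --- you already dispose of the early-termination case correctly by invoking the stopping condition $m_{T'} \le \beta n/\epsilon^2$, which is also what the paper does --- so simply delete the last sentence and the argument is sound.
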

\begin{proof}
    In the case that $T' < T + 1$, the algorithm has terminated early, so, by definition of the algorithm, we must have $m_{T'}\leq \beta n /\epsilon^2 = O(n/\epsilon^2)$ edges.
    % so we must have $m_{T'}\leq \beta n /\epsilon^2 = O(n/\epsilon^2)$ edges by definition.

    Thus, suppose $T' = T+1$. Then, since 
    $\hat{m}_{i+1}=\hat{m}_i/3 + 6n/\delta_i$ from
    % as stated in 
    Step \ref{it:2} of Algorithm \ref{alg:1} and $T = \log_{3} (m/n) - \log_{3}(1/\epsilon^2)$ with $\hat{m}_0 =  m$, we can unroll the recurrence to see that
    \[
    \hat{m}_{T+1} =
    \frac{\hat{m}_0}{3^{T+1}} + \sum_{i=0}^T \left(\frac{1}{3}\right)^{T-i} \frac{6n}{\delta_i} = \frac{n}{3\epsilon^2} + \sum_{i=0}^T \left(\frac{1}{3}\right)^{T-i} \frac{6n}{\delta_i}.
    \]
    Then, recalling that $\delta_i = \delta_T/2^{T-i}$ and $\delta_T = (\epsilon / 10c')^2$, this becomes
    \[
    \hat{m}_{T+1}=
    \frac{n}{3\epsilon^2} + \sum_{i=0}^T \left(\frac{2}{3}\right)^{T-i} \frac{6n}{\delta_T} = O\left(\frac{n}{\epsilon^2}\right) + O\left(\frac{n}{\delta_T}\right) = O\left(\frac{n}{\epsilon^2}\right),
    \]
    by geometric series argument. Since $m_{T'} = m_{T+1} \leq \hat{m}_{T+1}$, we are done.
\end{proof}

We must also confirm that $G_{T'}$ is a good spectral approximation. As discussed before, this proof is based on a geometric series argument guaranteed by our careful definition of the $\delta_i$'s.
\begin{lemma} \label{lem:e-approx}
    Suppose we choose $\delta_i$ for $i \in [T]$ as in Lemma \ref{lem:deltas}. Then, the final graph $G_{T'}$ returned by Algorithm \ref{alg:1} is an $\epsilon$-spectral approximation of $G$.
\end{lemma}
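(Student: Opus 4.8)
The plan is to obtain the approximation bound by composing the single-step guarantees supplied by Theorem~\ref{th:rsboundcase} along the chain $G = G_0, G_1, \dots, G_{T'}$ produced by Algorithm~\ref{alg:1}, and then to collapse the accumulated error using a geometric series in the $\delta_i$'s — exactly the pattern of the proof of Theorem~\ref{th:ks-bss1}, except that here the $\delta_i$ are not constant. The key point that makes the chaining legitimate is Lemma~\ref{lem:deltas}: for every $i < T'$ the value $\delta_i = \delta_T/2^{\,T-i}$ satisfies $\delta_i \ge 3n/\hat m_i$ (and $\delta_i \le \delta_T$, which is sufficiently small for small $\epsilon$), so Theorem~\ref{th:rsboundcase} does indeed apply at each step.

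Concretely, I would first record that Theorem~\ref{th:rsboundcase} gives, for $i = 0, 1, \dots, T'-1$,
\[
e^{-10c\sqrt{\delta_i}}\, L_{G_i} \preceq L_{G_{i+1}} \preceq e^{10c\sqrt{\delta_i}}\, L_{G_i},
\]
where the reweighting by the appropriate power of $2$ has already been folded into $G_{i+1}$. Since the Löwner order is transitive and is preserved under multiplication by a positive scalar, chaining these $T'$ sandwiched inequalities from $G_{T'}$ down to $G_0$ yields
\[
\exp\!\left(-10c\sum_{i=0}^{T'-1}\sqrt{\delta_i}\right) L_G \preceq L_{G_{T'}} \preceq \exp\!\left(10c\sum_{i=0}^{T'-1}\sqrt{\delta_i}\right) L_G.
\]

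It then remains to bound the exponent. Because $\delta_i = \delta_T/2^{\,T-i}$, we have $\sqrt{\delta_i} = \sqrt{\delta_T}\,(1/\sqrt2)^{\,T-i}$, and since $T' \le T+1$ the indices satisfy $i \le T$, so
\[
\sum_{i=0}^{T'-1}\sqrt{\delta_i} \;\le\; \sqrt{\delta_T}\sum_{k=0}^{\infty}\left(\frac{1}{\sqrt2}\right)^{k} \;=\; \frac{\sqrt{\delta_T}}{\,1 - 1/\sqrt2\,} \;=\; (2+\sqrt2)\sqrt{\delta_T}.
\]
Substituting $\delta_T = (\epsilon/10c')^2$ with $c' = c(2+\sqrt2)$, the accumulated factor becomes $10c\sum_{i=0}^{T'-1}\sqrt{\delta_i} \le 10c(2+\sqrt2)\sqrt{\delta_T} = \epsilon$, whence $e^{-\epsilon} L_G \preceq L_{G_{T'}} \preceq e^{\epsilon} L_G$. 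Up to the same constant-factor translation between the $e^{\pm\epsilon}$ and $(1\pm\epsilon)$ forms already invoked in the proof of Theorem~\ref{th:ks-bss1}, this is precisely the statement that $G_{T'}$ is an $\epsilon$-spectral approximation of $G$.

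I do not expect a genuine obstacle in this lemma: the heavy lifting — validating the choice of $\delta_i$ at every step (Lemma~\ref{lem:deltas}, which crucially relies on the early-termination threshold $\beta n/\epsilon^2$) and the size bound (Lemma~\ref{lem:linear-general}) — has already been done, and this argument is the short capstone combining them. The only points requiring care are (i) that the $e^{\pm}$ spectral guarantees compose cleanly, which follows from transitivity and scalar-invariance of $\preceq$, and (ii) that the constant $c' = c(2+\sqrt2)$ is calibrated so the geometric sum of the $\sqrt{\delta_i}$ exactly cancels the factor $10c$ to leave $\epsilon$ rather than $O(\epsilon)$.
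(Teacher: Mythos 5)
Your proposal is correct and follows essentially the same route as the paper's proof: chain the single-step $10c\sqrt{\delta_i}$ guarantees from Theorem~\ref{th:rsboundcase} along the algorithm's sequence, bound $\sum_{i=0}^{T'-1}\sqrt{\delta_i}$ by the infinite geometric series $(2+\sqrt{2})\sqrt{\delta_T}$ via $\delta_i = \delta_T/2^{T-i}$, and use the calibration $\delta_T = (\epsilon/10c')^2$ with $c' = c(2+\sqrt{2})$ to conclude. Your write-up is somewhat more explicit about the transitivity/scalar-invariance of $\preceq$ used in the chaining, but this is the same argument.
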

\begin{proof}
    By Step \ref{it:2} of Algorithm \ref{alg:1}, we know $G_{i+1}$ is a $10c\sqrt{\delta_i}$ approximation of $G_i$ for all $i$. Thus, our approximation factor of $G_{T'}$ is
    \[
    \sum_{i=0}^{T'-1}  10c \sqrt{\delta_i} \leq 10c \sum_{i=0}^{T} \sqrt{\delta_i} = 10c \sum_{i=0}^{T} \left(\frac{1}{\sqrt{2}}\right)^{T-i} \cdot \sqrt{\delta_T},
    \]
    by our definition $\delta_i = \delta_T/2^{T-i}$. By geometric series argument, this is bounded above by $10c (2+\sqrt{2}) \sqrt{ \delta_T} = \epsilon$ so we are done.
\end{proof}
Finally, Theorem \ref{th:ks-bssgen} follows immediately from the results we have shown.
\begin{proof}[Proof of Theorem \ref{th:ks-bssgen}]
    By Lemmas \ref{lem:linear-general} and \ref{lem:e-approx}, the graph returned by Algorithm \ref{alg:1} where we choose $\delta_T = (\epsilon / 10c')^2$ and $\delta_i = \delta_T/2^{T-i}$ is an $\epsilon$-approximation of $G$ of size $O(n/\epsilon^2)$.
\end{proof}

\section{Conclusions \& Further Directions}

In this paper, we explicitly formalize the longstanding assumption that the MSS proof of the Kadison-Singer conjecture \cite{marcus2015interlacing} implies the BSS result of linear-sized spectral sparsifiers \cite{batson2012twice}. Though our proof does not match the ``twice-Ramanujan'' bound in the BSS paper, this does not necessarily mean that the complete BSS result does not follow from the MSS theorem. Our main goal was to show $O(n/\epsilon^2)$ sparsifiers, so there may exist areas in our analysis where it is possible to further tighten the constants chosen for our bounds. 

Moreover, our results only depend on an implication of Corollary 1.5 from \cite{marcus2015interlacing}. Given this, a possible area for further research is whether the MSS result in its full generality can be utilized to prove linear-sized sparsifiers for other, stronger notions of spectral approximation such as the directed generalization of spectal approximation offered in \cite{cohen2017almost}, unit-circle approximation as defined in \cite{ahmadinejad2020high}, or even for singular value approximation introduced by \cite{ahmadinejad2023singular}. So far, the best known sparsifiers are of size $\widetilde{O}(n)$ for each of these definitions \cite{cohen2017almost, ahmadinejad2020high, ahmadinejad2023singular}.

\section{Acknowledgements}
Thank you to Salil Vadhan for introducing us to spectral sparsification and for the many important suggestions, comments, and edits he made while advising us throughout the course of this project. We would also like to thank Aaron Sidford and Nikhil Srivastava for their helpful advice and encouragement.

\printbibliography

\end{document}